\newtheorem{Def}{Definition}
\newtheorem{Theorem}{Theorem}
\newtheorem{Lemma}[Theorem]{Lemma}
\newtheorem{Corollary}[Theorem]{Corollary}
\newtheorem{Proposition}[Theorem]{Proposition}
\newtheorem{Conjecture}[Theorem]{Conjecture}
\newcommand{\sgn}{\operatorname{sgn}}
\newcommand{\out}{\operatorname{out}}
\newcommand{\len}{\operatorname{len}}
\newcommand{\capp}{\operatorname{cap}}
\newcommand{\D}{\mathcal{D}}
\newcommand{\dist}{\operatorname{dist}}
\newcommand{\eps}{\varepsilon}
\newcommand{\vE}{\vec{E}}
\newcommand{\EG}{\vec{E}(G)}
\newcommand{\E}{\mathbb{E}}
\newcommand{\F}{\mathcal{F}}
\newcommand{\G}{\mathcal{G}}
\newcommand{\N}{\mathbb{N}}
\renewcommand{\P}{\mathbb{P}}
\newcommand{\R}{\mathbb{R}}
\newcommand{\U}{\mathcal{U}}
\renewcommand{\:}{\colon}
\DeclareMathOperator{\LA}{LA}
\DeclareMathOperator{\RLA}{RLA}
\DeclareMathOperator{\MLA}{MLA}
\DeclareMathOperator{\URG}{URG}
\DeclareMathOperator{\URN}{URFN}
\DeclareMathOperator{\FIID}{FIID}
\title{Local algorithms for the maximum flow and minimum cut in bounded-degree networks}
\author{Endre Cs\'oka%
\thanks{Alfr\'ed R\'enyi Institute of Mathematics. Supported by the NRDI grant KKP~138270.%
}, Andr\'as Pongr\'acz%
\thanks{Alfr\'ed R\'enyi Institute of Mathematics. Supported by the NRDI grant KKP~138270. \newline
Competing interests: The authors declare none.%
}}
\date{}
\begin{document}

\maketitle

\begin{abstract}

We show a deterministic constant-time local algorithm for constructing an approximately maximum flow and minimum fractional cut in multisource-multitarget networks with bounded degrees and bounded edge capacities. 
Locality means that the decision we make about each edge only depends on its constant radius neighborhood. 
We show two applications of the algorithms: one is related to the Aldous-Lyons Conjecture, and the other is about approximating the neighborhood distribution of graphs by bounded-size graphs. 
The scope of our results can be extended to unimodular random graphs and networks. 
As a corollary, we generalize the Maximum Flow Minimum Cut Theorem to unimodular random flow networks. 

\end{abstract}

\section{Introduction}

A deterministic local algorithm ($\LA$) on a graph means that we make a decision at each vertex or edge depending on the isomorphism class of its constant-radius neighborhood of that vertex or edge. 
It is a special case of the more commonly used notion of a random local algorithm ($\RLA$), where we assign independent uniform random seeds from $[0, 1]$ to each vertex, and we make a decision at each vertex or edge depending on its constant-radius seeded neighborhood. 
For example, choosing the set of vertices with higher seeds than all of its neighbors is an $\RLA$ that finds an independent set of expected size $\sum\limits_{x \in V} \frac{1}{\deg(x) + 1}$. 

Local algorithms were defined by Linial \cite{Linial} as distributed algorithms whose complexity is measured by the number of synchronized rounds regardless of the computational time and space. 
The notion of a local algorithm as defined above means local algorithm in Linial's sense with a constant number of rounds. 

In this paper, we consider graphs with degrees bounded by a constant $d$. 
We note that in this case, $\LA$ and $\RLA$ coincide with deterministic and randomized constant-time distributed algorithms, respectively. 
Research on local algorithms was pioneered by Angluin \cite{Angluin}, Linial \cite{Linial}, and Naor and Stockmeyer \cite{NaSt}. 
Angluin \cite{Angluin} studied the limitations of anonymous networks without any unique identifiers. 
Linial \cite{Linial} proved some negative results for the case where each node has a unique identifier. 
Naor and Stockmeyer \cite{NaSt} presented the first nontrivial positive results. 
For more about local algorithms, see the survey paper by Suomela \cite{Suomela}.

A mixed local algorithm ($\MLA$) is a probability distribution on local algorithms. 
Equivalently, we draw a global random seed $g$ from a uniform distribution on the unit interval, which can be accessed by all vertices. 
The decision depends on the constant radius neighborhood of the vertex or edge, and the global seed $g$. 

For typical problems, we may only expect approximate solutions from local algorithms rather than strictly optimal ones. 
For example, we say that we can find an almost maximum independent set if for each $\eps > 0$, there exists a local algorithm that outputs an independent set, and the size of this set is at most $\eps |V(G)|$ less than the size of the maximum independent set, where $V(G)$ is the vertex set of the graph. 

It is well-understood that randomness can be a useful addition to local algorithms, notably to break the symmetry. 
For instance, a deterministic local algorithm must make the same decision on every vertex or edge of a vertex- or edge transitive graph, say a cycle. 
Hence, a correct $\LA$ (or even a correct $\MLA$) aimed at finding an independent set always chooses the empty set on a nonempty transitive graph: it must put either no vertex or all vertices in the set, and the latter is an incorrect option for an independent set. 
However, we already mentioned a much better $\RLA$ based on a trivial idea that finds an independent set with expected relative size $1/3$ on a cycle, or in general at least $1/(d+1)$ if $d$ is the maximum degree of the graph. 

It is perhaps less obvious how mixing could be advantageous. 
In the present paper, we provide an $\MLA$ that finds a nearly minimum cut in graphs (in fact, in networks). 
In the parallel paper \cite{Csoka}, it is shown that this is not possible to do by using an $\RLA$: there is a positive barrier $C(d)$ for the relative error depending only on the maximum degree $d$ that cannot be broken by any $\RLA$. 
That is, no $\RLA$ finds a cut whose relative size is closer to the optimum than $C(d)$ for all graphs of maximum degree $d$. 
For a more detailed comparison of the power of different variants of local algorithms, see \cite{Csoka}.

The main results of the present paper is the construction of a deterministic local algorithm for finding a nearly maximum flow and for finding a nearly minimum fractional cut in networks with bounded cost function on bounded-degree graphs. 
That is, for any $\eps>0$, $M>0$ and $d\in \mathbb{N}$ there is an $\LA$ that computes a flow in any network with cost bounded by $M$ and degrees bounded by $d$, whose relative value is at most $\eps$ apart from the maximum, and similarly an $\LA$ that computes a fractional cut whose relative value is at most $\eps$ apart from the minimum. 
The above-mentioned $\MLA$ for finding a nearly minimum cut is a mixture of $\LA$s derived from the above $\LA$ for nearly minimum fractional cut and a global seed $g$ with a uniform distribution on $[0,1]$. 
For those who are interested in standard (non-distributed) algorithms rather than the distributed local algorithms considered in the present paper, these results provide a linear-time algorithm for finding a nearly maximum flow and a nearly minimum cut in the bounded degree and bounded cost function setting. 
Namely, the local algorithm can be executed in all vertices of the graph (sequentially rather than in parallel), each execution requiring $O(1)$ time, which yields an $O(|V(G)|)$ total runtime. 

As the error rate is normalized by the number of nodes, these results are only meaningful if we allow any number of sources and targets. 
If we only allowed 1, or even $o(n)$ sources or targets, then the trivial algorithm that constructs the all-zero flow and the empty cut for large enough graphs (dealing with small networks by using a catalogue) would solve the problem up to any prescribed relative error $\eps>0$. 
Hence, even though we need not add any technical assumptions to our theorems because of this observation, it is still better to imagine large networks whose sources and sinks make up at least a constant proportion of the nodes when reading the assertions and proofs in this paper. 


Local algorithms are useful for parameter testing, as well. 
We give a brief overview about parameter testing and the theory of very large graphs; for more details, cf. the survey paper and book \cite{Lovasz, Lovaszbook} by Lov\'asz. 
Parameter testing is an important concept in the theory of bounded degree graphs \cite{BeSchSh, BObT, CzuShaSo, Elek2, Elek, LOW, MaRo, NgOn}. 
For a graph parameter, a tester is an algorithm which gets the constant radius neighborhoods of a constant number of random nodes as input, and outputs a number that estimates the parameter. 
We call a parameter testable if for each $\eps > 0$, there exists a tester which estimates the true parameter with at most $\eps$ error with probability at least $1 - \eps$.

Many of the examined parameters come from optimization problems. 
The relative size of the maximum matching, the maximum independent set, or the minimum vertex cover normalized by the number of nodes, are standard examples. 
In these problems, if we have a random local algorithm which provides an almost optimal structure, e.g.\ an almost maximum independent set, then it is easy to make a tester for the parameter. 
Nguyen and Onak \cite{NgOn} proved the testability of several problems using randomized local algorithms.


In the final section, we generalize our results from finite to unimodular random graphs and networks. 
The 
As a corollary, we obtain the highly nontrivial generalization of the Maximum Flow Minimum Cut Theorem to unimodular random flow networks. 
Furthermore, we show an application related to the Aldous-Lyons Conjecture. 
Namely, we prove that if the conjecture is either true or very far from being true. 
We note that since the first draft of this paper was created, Lovász has proved a very general Maximum Flow Minimum Cut Theorem. 
In that paper~\cite{Lovaszmeasure}, flows and cuts were defined on Markov spaces, and it was shown that there is a measurable minimum cut whose value coincides with that of a flow. 
The proof relies on classical results in functional analysis such as the Hahn-Banach theorem, making it non-constructive. 
In contrast, the result of the present paper, albeit in a restricted scope compared to~\cite{Lovaszmeasure}, are constructive, describing local algorithms that compute asymptotically optimal flows and fractional cuts. 
It is unknown whether such local algorithms exist beyond the realm of unimodular random flow networks with bounded degree. 

\section{Model and results}

Throughout the paper, $d$ is a fixed positive integer, and degrees in graphs are bounded by $d$. 
An input network $N = (G, c)$ consists of a graph $G$ and a capacity function $c$ defined on the edges. 
The graph $G = (S, R, T, \vE)$ has maximum degree at most $d$, and its vertices are separated into the disjoint union of the sets $S$ (source), $R$ (regular) and $T$ (target). The binary relation $\vE$ is the set of directed edges of $G$. 
It is symmetrical, that is, $(a, b) \in \vE \Leftrightarrow (b, a) \in \vE$. 
We have a nonnegative capacity function $c \: \vE \rightarrow [0, \infty)$. 
The total capacity of all edges is denoted by $\capp(N) = \sum\limits_{e \in \vE} c(e)$. 
We usually work under the condition that flow capacities are bounded by some fixed constant $M>0$. 
That is, $c(e)\leq M$ for every edge $e\in \vE$. 

Throughout the paper we use the terms "graph", "path" and "edge" in the directed sense. 
Let $V = V(G) = V(N) = S \dot\cup R \dot\cup T$, $|V| = n$, $\out(A) = \big\{(a, b) \in \vE \ \big|\ a \in A, b \notin A \big\}$, and $\out(v) = \out\big(\{v\}\big)$. 
For an edge $e = (a, b)$, let $-e = (b, a)$ denote the reversed edge.

A function $f\: \vE \rightarrow \mathbb{R}$ is called a flow if it satisfies the following conditions.
\begin{align}
\forall e \in \EG\:&\,\,\,\, f(-e) = -f(e), \label{antisymmetry}
\\ \forall e \in \EG\:&\,\,\,\, f(e) \le c(e), \label{capacity}
\\ \forall r \in R\:&\,\,\,\, \sum_{e \in \out(r)} f(e) = 0. \label{regeq}
\end{align}
The value of a flow $f$ is
\begin{equation} \label{flownorm}
\big\|f\big\| = \sum_{e \in \out(S)} f(e).
\end{equation}
Denote a maximum flow by $f^*(N)$, or simply by $f^*$ is there is no ambiguity about which network is considered.

A partition $V=X\cup(V\setminus X)$ with $T \subseteq X \subseteq T \cup R$ is called a cut. 
We can refer to the cut by the subset $X$ in the partition containing the targets. 
As fractional cuts play an important role in the present paper, it is often more convenient for us to represent $X$ as an indicator function. 
Hence, the alternative definition of a cut is a function $X: V\rightarrow \{0,1\}$ such that $X\upharpoonright_S \equiv 0$ and $X\upharpoonright_R \equiv 1$.
The value of a cut is
\begin{equation} \label{cutnorm}
\big\|X\big\| = \big\|X\big\|_N = \sum\limits_{e \in \out(V\setminus X)} c(e).
\end{equation}
The Maximum Flow Minimum Cut Theorem~\cite{FoFu} states that
\begin{equation} \label{maxflowmincut}
\min\limits_{T \subseteq X \subseteq T \cup R} \big\|X\big\| = \big\|f^*\big\|.
\end{equation}

We define a \emph{fractional cut} of a network $N$ as a function $\tilde{X}\: V(N) \rightarrow [0, 1]$ such that
\begin{equation} \label{fraccut}
\forall s \in S\: \tilde{X}(s) = 0 \text{ and } \forall t \in T\: \tilde{X}(t) = 1.
\end{equation}
The value of a fractional cut is defined as
\begin{equation} \label{fracnorm}
\big\|\tilde{X}\big\| = \big\|\tilde{X}\big\|_N = \sum_{(a, b) \in \vE(N)} c\big((a, b)\big) \max\big(0,\ \tilde{X}(b) - \tilde{X}(a)\big).
\end{equation}

Clearly, every cut $X: V(N)\rightarrow \{0,1\}$ is a fractional cut. 
There is no clash between the value of $X$ as a fractional cut and the original definition of the value of $X$ as a cut: 
\begin{equation*}
\big\|X\big\| \mathop{=}^{\eqref{fracnorm}} \sum_{(a, b) \in \vE(G)} c\big((a, b)\big) \max\big(0,\ X(b) - X(a)\big) \mathop{=} \sum\limits_{(a, b) \in \out(V\setminus X)} c\big((a, b)\big) \mathop{=}^{\eqref{cutnorm}} \big\|X\big\|.
\end{equation*}
Although fractional cuts properly generalize cuts, the value of any fractional cut is a convex combination of values of cuts. 
If $\tilde{X}$ is a fractional cut, then for all $u \in (0, 1)$, $\tilde{X}[u] = \big\{v \in V \big| \tilde{X}(v) \geq u \big\}$ is a cut, and for a uniform random $u$ from $(0, 1)$,
\begin{equation*}
\E_u\Big( \big\|\tilde{X}[u]\big\| \Big) \mathop{=}^{\eqref{cutnorm}} \E_u\Big( \sum\limits_{e \in \out(V\setminus\tilde{X}[u])} c(e) \Big) = \sum\limits_{e \in \vE} \mathbb{P} \Big(e \in \out \big(V\setminus\tilde{X}[u] \big) \Big) c(e)
\end{equation*}
\begin{equation} \label{fractocut}
= \sum\limits_{(a, b) \in \vE} \mathbb{P}(\tilde{X}(a) < u \le \tilde{X}(b)) c((a,b)) = \sum\limits_{(a, b) \in \vE} c\big((a, b)\big) \max\big(0,\ \tilde{X}(b) - \tilde{X}(a)\big) \mathop{=}^{\eqref{fracnorm}} \big\|\tilde{X}\big\|.
\end{equation}

These two observations imply that it makes no difference to consider the minimum fractional cut instead of the minimum cut, that is, $\min_X \big\|X\big\| = \min_{\tilde{X}} \big\|\tilde{X}\big\|$.

\subsection{Local flow and local cut algorithms}

The rooted $r$-neighborhood of a vertex $v$ or edge $e$, denoted by $B_r(v) = B_r(G, v)$ and $B_r(e)$, means the (vertex- or edge-)rooted induced subnetwork of the vertices at distance at most $r$ from $v$ or $e$, rooted at $v$ or $e$, respectively. 
The set of all possible $r$-neighborhoods (up to isomorphism) are denoted by $B_r$ and $
B_r^{(2)}$, respectively. 
As we restrict our attention to graphs all of whose vertices have degree at most $d$, there are only finitely many possible underlying rooted graph structures in $B_r$ and $B_r^{(2)}$; with infinitely many possible different cost functions, of course, forming a compact set if we assume that a universal bound $c(e)\leq M$ is given. 
A function $F\: B_r^{(2)} \rightarrow \mathbb{R}$ is called a local flow algorithm, if for each network $N$, the induced mapping $F(N) = \big(e \mapsto F(B_r(e))\big)$ is a flow. 
Similarly, a function $C\: B_r \rightarrow \{0,1\}$ is called a local cut algorithm, if for each network $N$ the induced function $C(N) = \big(v \mapsto C(B_r(v))\big)$ is a cut. 
A function $\tilde{C}\: B_r \rightarrow [0, 1]$ is called a local fractional cut algorithm, if for each network $N$ the induced function $\tilde{C}(N) = \Big( v \mapsto \tilde{C}\big(B_r(v)\big)\Big)$ is a fractional cut. 

\begin{Theorem} \label{localg}
For all $\eps > 0$, there is an $r\in \N$ and a local flow algorithm $F\: B_r^{(2)} \rightarrow \mathbb{R}$ such that 
\begin{equation} \label{Thm1eq}
\big\|F(N)\big\| \ge \big\|f^*(N)\big\| - \eps \capp(N).
\end{equation}
In particular, assuming a universal bound $M$ on the expected capacity, for all $\eps > 0$, there is an $r\in \N$ and a local flow algorithm $F\: B_r^{(2)} \rightarrow \mathbb{R}$ such that 
\begin{equation} \label{Thm1eqbounded}
\big\|F(N)\big\| \ge \big\|f^*(N)\big\| - \eps n.
\end{equation}
\end{Theorem}

\begin{Corollary}\label{cor:integerflow}
Let $x$ be a mapping that assigns an independent uniform random seed in $[0,1]$ to the vertices of a network. 
Assuming a universal bound $M$ on the  expected capacity and that all capacities are integers, for all $\eps > 0$, there is an $r\in \N$ and a random local flow algorithm $F\: B_r^{(2)} \rightarrow \mathbb{R}$ defined on the seeded neighborhoods such that 
\begin{equation} \label{Thm1eqbounded}
\E_x\big(\big\|F(N)\big\|\big) \ge \big\|f^*(N)\big\| - \eps n.
\end{equation}
and such that all values of the output flow are integers. 
\end{Corollary}

\begin{Proposition} \label{localgfraccut}
Assuming a universal bound $M$ on the capacity function, for each $\eps > 0$ there exists a local fractional cut algorithm such that 
\begin{equation}
\big\|\tilde{C}(N)\big\| \le \|f^*(N)\| + \eps n.
\end{equation} 
\end{Proposition}

In other words, maximum flow is approximable in $\LA$, maximum integer flow is approximable in $\RLA$, and minimum fractional cut is approximable in $\LA$. 
As an easy consequence, we obtain an algorithm in $\MLA$ that approximates a minimum cut. 

\begin{Theorem} \label{localgcut}
Assuming a universal bound $M$ on the capacity function, for all $\eps > 0$, there is an $r\in \N$ and a probability distribution $\D$ of local cut algorithms $C\: B_r \rightarrow \{0,1\}$ such that 
\begin{equation*}
\E_{C \in \D} \big\|C(N)\big\| \le \big\|f^*(N)\big\| + \eps n.
\end{equation*}
\end{Theorem}

We mention that with a more accurate calculation, the ideas of this paper could be refined to yield an algorithm for each problem using radius $\approx d^{1 / \eps}$ (if $\eps$ is small enough).

\begin{Corollary} \label{testable}
Assuming a universal bound $M$ on the capacity function, the network parameter $\big\|f^*(N)\big\| / n$ is testable. 
Namely, for all $\eps > 0$ there exist $k, r \in \mathbb{N}$ and a function $g\: B_r^k \rightarrow \mathbb{R}$ satisfying that if the vertices $v_1, v_2, \ldots, v_k$ are chosen independently with uniform distribution, then
\begin{equation*}
\E\bigg( \bigg|\frac{\|f^*(N)\|}{n} - g\big(B_r(v_1), B_r(v_2), \ldots, B_r(v_k)\big)\bigg| \bigg) < \eps.
\end{equation*}
\end{Corollary}

We note that for all $\eps > 0$, having an approximation with less than $\eps n$ error in expectation is a stronger requirement than having this error with at least $1 - \eps$ probability. 
But these conditions are equivalent if the error is bounded.

\subsection{Applications on distributions of neighborhoods}\label{subsec:AL}

Let $\G$ denote the set of all graphs with degrees bounded by $d$.
Let $s_r(G) \in \R^{B_r}$ denote the distribution of the $r$-neighborhood of a random vertex of a graph $G \in \G$.
We call a family $\F \subseteq \G$ of graphs \textbf{nice} if it is union-closed and closed under taking spanned subgraphs, excluding the empty graph. 
Formally, $G_1, G_2 \in \F \Rightarrow G_1 \cup G_2 \in \F$, $G_1 \subseteq^* G \in \F \Rightarrow G_1 \in \F$ and $\emptyset \notin \F$, where $\subseteq^*$ denotes nonempty spanned subgraph. 
Let us denote the closure of the set of all $r$-neighborhood distributions in $\F$ by 
\begin{equation} \label{Ddef}
S_r(\F) = cl\big\{s_r(G) \ \big|\ G \in \F \big\}.
\end{equation}

\begin{Theorem} \label{separation}
Assume that $s_r(G_0) \notin S_r(\F)$ holds for a nice family $\F$ of graphs, a graph $G_0 \in \G$ and $r\in \N$.
Then for all $\eps > 0$, there exists an $r'\in \N$, a subset $B' \subset B_{r'}$, and a graph $G_1 \in \G$ such that 
\begin{align*}
\P\big(s_{r'}(G_1) \in B'\big) &> 1 -\eps, \,\, and\\
\forall G \in \F\: \,\, \P\big(s_{r'}(G) \in B'\big) &< \eps.
\end{align*}
\end{Theorem}

This theorem is shown in a slightly more general form in Subsection~\ref{subs:pumping}, giving an explicit bound on $r' = r'(r, \eps, \delta)$, where $\delta$ is a certain kind of distance of $s_r(G_0)$ from $S_r(\F)$.

Lovász \cite{Lovaszbook} asked to find, for every radius $r \in \N$ and error bound $\eps > 0$, an explicit $n \in \N$ such that the $r$-neighborhood distribution of each graph can be $\eps$-approximated by a graph of size at most $n$. Formally,
\begin{align} \label{ndef}
\forall G \in \G \: \exists G' \in \G \: \big|V(G')\big| \le n,
&& \big\|s_r(G) - s_r(G')\big\|_1 < \eps.
&&&&&&&&&&&&&&&&&&&&&&&&&
\end{align}

Without requiring an effective bound, it is obvious that such a bound $n$ and family of graphs exist, as the set $S_r(\F)$ is compact. 
It is still open whether, say, a recursive function exists that computes such a bound. 
It is also unclear how to compute the graph $G'$ from $G$; see \cite{Lovaszbook} for details.

Instead of $\big|V(G')\big| \le n$, we only require that \emph{each component} of $G'$ has size at most $n$. This version is very close to the original question, because the $r$-neighborhood distribution of such a graph is a convex combination of the $r$-neighborhood distributions of the components, and each convex combination can be approximated by a graph with a bounded number of small components.

Let $ n = n(r, \eps)$ denote the smallest value of $n$ satisfying the modified conditions. 
The following corollary shows that if there exists an arbitrarily large error bound $\lambda < 1$ such that for all $r$, we can find an explicit upper bound on $n(r, \lambda)$, then it provides explicit upper bounds on $n(r, \eps)$ for all $r \in \N$ and $\eps > 0$, as well.

\begin{Corollary} \label{smallgraph}
For all $r \in \N$, $0 < \delta$, and $\lambda < 1$, there exists an $r'\in\N$ such that $n(r, \delta) \le n(r', \lambda)$. 
\end{Corollary}

\section{Proofs}

\subsection{Proof of Theorem~\ref{localg} and Corollaries~\ref{cor:integerflow} and \ref{testable}}

An augmenting path of a flow $f$ is a directed path $u = (e_1, e_2, \ldots, e_k)$ from $S$ to $T$ with $f(e_i) < c(e_i)$ for each edge $e_i$. The capacity of $u$ means $\capp(u) = \capp(u, f) = \min\limits_{i}(c(e_i) - f(e_i))$. We identify an augmenting path $u$ with the flow $u\: \EG \rightarrow \mathbb{R}$, $u(e) =
\begin{cases}
1\text{ if }\exists i\: e = e_i\\
-1\text{ if }\exists i\: e = - e_i\\
0\text{ otherwise}
\end{cases}$
which we also call path-flow. Augmenting on such a path $u$ means the incrementation of $f$ by $\capp(u) \cdot u$. 
The length of a path $u$ is the number of its edges, denoted by $\len(u)$. 

\begin{Lemma} \label{approx1}
If a flow $f$ has no augmenting path of length at most $\ell$, then
\begin{equation} \label{approx1eq}
\big\|f\big\| \ge \big\|f^*\big\| - \frac{\capp(N)}{\ell}.
\end{equation}
\end{Lemma}

\begin{proof}
The function $f^* - f$ is a flow on the network $(G, \bar{c})$, where $\bar{c}(e) = c(e) + c(-e)$. 
This flow can be decomposed into the sum of path-flows $u_1, u_2, \ldots, u_q$ and a circulation $u_0$ that follow the directions of the flow $f^* - f$, i.e., for every $i \in \{0, \ldots, q\}$ and $e \in \EG$ we have $\sgn(u_i(e)) \in \{0, \sgn((f^*-f)(e))\}$. 
For example, we can do it by the Ford-Fulkerson algorithm \cite{FoFu} on the network $\big((S, R, T, \{e \in \EG \: (f^*-f)(e) > 0\}), \bar{c}\big)$. 
Thus
\begin{equation*}
\big\|f^*\big\| - \big\|f\big\| = \big\|f^* - f\big\| = \sum_{i=1}^q \big\|u_i\big\| = \frac{1}{2\ell} \sum_{i=1}^q 2\ell \big\|u_i\big\| \le \frac{1}{2\ell} \sum_{i=1}^q \sum_{e \in \EG} \big\|u_i(e)\big\|
\end{equation*}
\begin{equation*}
= \frac{1}{2\ell} \sum_{e \in \EG} \Big\|\sum_{i=1}^q u_i(e)\Big\| \le \frac{1}{2\ell} \sum_{e \in \EG} \bar{c}(e) \le \frac{2\capp(N)}{2\ell} = \frac{\capp(N)}{\ell}. \qedhere
\end{equation*}
\end{proof}

The next lemma is well-known. 
It is crucial to us as this lemma guarantees that the local flow algorithm constructed in the proof of Theorem~\ref{localg} halts. 

\begin{Lemma}\label{lem:nonewshort}
If a flow $f$ has no augmenting path $u$ with $\len(u)<\ell$, then augmenting on a path of length $\ell$ does not create a new augmenting path of length at most $\ell$.
\end{Lemma}

\begin{proof}
Let the residual graph of a network $N = (G, c)$ with respect to a flow $f$ be the graph $G_f = \Big(S(G), R(G), T(G), \big\{e \in \EG \big| f(e) < c(e) \big\}\Big)$. 
Then the augmenting paths of $G$ can be identified with the paths in $G_f$ from $S$ to $T$. 
Therefore, the length of the shortest path in $G_f$ from $S$ to $T$ is $k$. 
Let the forwardness of an edge of $N$ be defined as the difference of the distances of its endpoint and starting point from $S$ in $G_f$. 
Augmenting on a shortest path adds only such edges to the residual graph on which $f$ decreases, which are the reverse edges of the path. 
All these edges have forwardness $-1$ (calculated before augmenting). 
So if a path becomes an augmenting path at this augmenting step, then all its edges have forwardnesses at most 1 and contain an edge with forwardness $-1$. 
As the total forwardness is $k$, the length of the path is at least $k + 2$. 
\end{proof}

Let us fix a length $\ell$, and let us call the paths of length at most $\ell$ \textbf{short paths}.
Let us choose a uniform random seed $x(u)\in [0,1]$ independently for every short path $u$. 
We may assume that these seeds are all different, as it is an almost sure event. 
The label of $u$ is $\len(u)+x(u)$. 
Hence, if we order the short paths by label, then shorter ones always precede longer paths. 
In other words, we first partially ordered the short paths by length, and then extended the partial order to a total order by randomly sorting paths of the same length. 

We define a \textbf{chain} as a sequence $u_1, u_2, \ldots,  u_s$ of short paths with increasing random labels such that for all $i \in \{1, 2, \ldots, s-1\}$ there exists a common undirected edge of $u_i$ and $u_{i+1}$ (henceforth: these paths intersect each other). 

\begin{Lemma} \label{spread}
For each $\ell \in \mathbb{N}$ and $\eps > 0$ there exists a $q = q(\ell, \eps) \in \mathbb{N}$ such that for every graph $G$ and undirected edge $e$ of $G$, 
the probability that there exists a chain $u_1, u_2, \ldots, u_q$ for which $u_q$ contains $e$ is at most $\eps$.
\end{Lemma}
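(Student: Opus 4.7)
The approach is to bound the expected number of chains of length $q$ that start with a path through $e$, and then invoke Markov's inequality: since the number of such chains is a non-negative integer, the probability that at least one exists is bounded by its expectation.

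First I would count the candidate sequences without imposing the label condition. Let $B = B(d,l)$ denote an upper bound on the number of paths of length at most $l$ through any given edge, which is finite by the degree bound (e.g.\ $B \le l\, d^{\,2l}$). For an ordered tuple $(u_1, \ldots, u_q)$ of distinct paths with $e \in u_1$ and with $u_i, u_{i+1}$ sharing an undirected edge, there are at most $B$ choices for $u_1$; each subsequent $u_{i+1}$ must contain one of the at most $l$ edges of $u_i$, and each such edge lies in at most $B$ paths, giving at most $lB$ choices. Hence at most $(lB)^q$ such sequences exist.

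Next, for a fixed such sequence, I would compute the probability that the labels $|u_i| + x(u_i)$ form a strictly decreasing sequence. Since $x(u_i) \in [0,1)$, the lengths $|u_i|$ must themselves be non-increasing, and within each maximal run of equal lengths the corresponding $x$-values must occur in the prescribed decreasing order; constraints between adjacent runs of strictly different length are automatic. As the $x(u)$'s are independent and uniform on $[0,1)$ across distinct paths, the probability of a fixed order within a run of size $m$ is exactly $1/m!$, so the probability that a fixed sequence is a chain factors as $\prod_{k=1}^{l} 1/m_k!$, where $m_k$ is the number of paths of length $k$ in the sequence. Summing over length compositions and using the multinomial identity $\sum_{m_1+\cdots+m_l=q} \binom{q}{m_1,\ldots,m_l} = l^q$,
\begin{equation*}
E\bigl[\text{number of chains of length } q \text{ through } e\bigr]
\;\le\; (lB)^q \sum_{m_1+\cdots+m_l=q} \prod_{k=1}^{l} \frac{1}{m_k!}
\;=\; \frac{(l^2 B)^q}{q!}.
\end{equation*}
By Stirling's formula this tends to $0$ as $q \to \infty$ with $l$ (hence $B$) fixed, so a value $q = q(l,\eps)$ making this bound below $\eps$ exists.

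The most delicate step is the factorization of the label probability across runs: it rests on independence of the $x(u)$'s over distinct paths and on the observation that the decreasing-label condition only nontrivially constrains $x$-values of paths of the same length (the gap of $\ge 1$ between integer parts of different runs absorbs any $x$-difference in $[0,1)$). Once that factorization is in place, the remaining counting and asymptotic estimates are routine.
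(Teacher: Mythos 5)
Your proposal is correct and follows essentially the same first-moment argument as the paper: count the at most exponentially many (in $q$) sequences of consecutively intersecting short paths starting at $e$, observe that the decreasing-label condition only genuinely constrains the labels of equal-length paths, bound the expected number of chains, and conclude via $P(X\ge 1)\le E(X)$. The only difference is quantitative: the paper uses a pigeonhole step (some $\lceil q/l\rceil$ paths share a length, giving probability $1/\lceil q/l\rceil!$ and the bound $z^q/\lceil q/l\rceil!$), while you factor over all length classes to get $(l^2B)^q/q!$; both bounds tend to $0$ as $q\to\infty$, which is all that is needed.
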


\begin{proof}
There exists an upper bound $z = z(\ell)$ for the number of short paths that intersect a given short path. 
Hence, there are at most $z^q$ sequences of short paths $u_1, u_2, \ldots, u_q$ for which $e$ is in $u_q$ and $\forall i \in \{1, 2, \ldots, q-1\}$ the path $u_i$ intersects $u_{i+1}$. 
All these sequences contain $\lceil q/\ell \rceil$ paths of the same length. 
The ordering of their labels is drawn randomly from $\lceil q/\ell \rceil !$ permutations. 
So the probability that the labels are decreasing is $1 / \lceil q/\ell \rceil !$. This event is necessary for the sequence to be a chain. Denote the number of chains in the lemma by the random variable $X$ (with respect to the random labelling). Then 

\begin{equation*}
P(X \ge 1) \le \E(X) \le \frac{z^q}{\lceil q/\ell \rceil !} \rightarrow 0 \text{ as } q \rightarrow \infty ,
\end{equation*}
which proves the lemma for some large enough number $q$.
\end{proof}

We prove Theorem~\ref{localg} after a series of lemmas. 
Consider the variant of the Edmonds--Karp algorithm where we augment on the short paths in the order of their label. 
In other words, we start from the empty flow, we take all short paths $u$ in increasing order of $\len(u) + x(u)$, and with each path, we increase the actual flow $f$ by $\capp(f, u) \cdot u$. 
We denote this algorithm by $A_1$ and the resulting flow by $f_1 = f_1(N, x)$.

Consider now the variant of the previous algorithm where we skip augmenting on each path which can be obtained as the last element of any chain of length $s$. 
We denote this algorithm by $A_2$ and the resulting flow by $f_2 = f_2(N, x)$. The next lemma shows that $f_2$ is a local algorithm.

\begin{Lemma} \label{f2local}
For each edge $e$ and random seed $x$, we have
\begin{equation*}
f_2(N, x)(e) = f_2 \big( B_{s\ell}(e), (x\upharpoonright_{V(B_{s\ell}(e))})\big)(e).
\end{equation*}
\end{Lemma}

\begin{proof}
Let us execute the two algorithms $f_2(N, x)$ and $f_2 \big( B_{s\ell}(e), (x\upharpoonright_{V(B_{s\ell}(e))})\big)$ in parallel. 
We adjust the timeline of the latter one so that whenever $f_2(N, x)$ augments on a short path $u$ that is not fully contained in $B_{s\ell}(e)$, then we make an idle step. 
If at a point, the two flows differ at an edge $\tilde{e} \in \vE(B_{s\ell}(e))$, then there must have been a path $u$ through $\tilde{e}$ on which the two algorithms augmented by different values. 
There are three possible reasons to it:
\begin{enumerate}
\item $u$ is not in $B_{s\ell}(e)$;
\item $u$ can be obtained as the last term of some chain of length $s$ in $G$, but not in $B_{s\ell}(e)$;
\item $u$ has an edge $e'$ in $B_{s\ell}(e)$ at which the values of the two flows were different before taking $u$.
\end{enumerate}

Assume that at the end, the two flows are different on $e$. 
Using the previous observation initially with $\tilde{e} = e$, let us take a short path through $\tilde{e}$ on which the two augmentations were different, and consider which of the three reasons occurred. 
As long as it is the third one, repeat the step with setting $\tilde{e}$ as the $e'$ of the previous step. 
Since by each step we jump to an earlier point of the execution of the algorithms, we must get a reason different from the third one eventually. 
Denote the considered paths, in order of appearance during the execution of $f_2(N, x)$, by $u_1, u_2, \ldots,  u_t$, where $u_1$ is the path where reason 1 or 2 applies, and $u_2, \ldots,  u_t$ were considered for reason 3. 
(That is, in the above description, we found $u_t$ first and $u_1$ last.)

Assume that reason 1 applies to $u_1$.
The set of all edges of the $t$ paths $u_1, u_2, \ldots,  u_t$ is connected. 
This set contains at most $t\ell$ edges. 
On the other hand, it must contain $e$ and an edge at least $s\ell$ away from $e$, so $t\ell > s\ell$, and consequently $t > s$. Thus the last $s$ short paths $u_{t-s+1}, u_{t-s+2}, \ldots, u_t$ is a chain with a connected edge set with size at most $s\ell$ containing $e$, so this chain is in $B_{s\ell}(e)$. 
Therefore, neither executions should have been augmented on $u_t$, a contradiction. 

Now assume that reason 2 applies to $u_1$. 
Then choose a chain of length $s$ as described in reason 2, ending with $u_1$, and append the sequence $u_2, u_3, \ldots,  u_t$ to it. 
The last $s$ short paths of the sequence obtained forms a chain that yields the same contradiction as in the previous case.
\end{proof}

Let $f_2$ be the output of $A_2$ and $\ell = \lceil 2/\eps\rceil$. 
With $q$ being the function guaranteed by Lemma~\ref{spread}, let $s = q(\ell, \eps/4)$. 
We show the following chain of inequalities. 
\begin{equation} \label{eq1}
\E\big(\big\|f_2\big\|\big) \ge \E\big(\big\|f_1\big\|\big) - \frac{\eps}{2} \capp(N) \mathop{\ge}^{\eqref{approx1eq}} \big\|f^*\big\| - \eps \capp(N)
\end{equation}

First we prove the second inequality of \eqref{eq1}.
As $f_1$ contains no short augmenting path, by using Lemma~\ref{approx1} we obtain
\begin{equation*}
\big\|f_1\big\| \mathop{\ge}^{\eqref{approx1eq}} \big\|f^*\big\| - \frac{\capp(N)}{\ell} \geq \big\|f^*\big\| - \frac{\capp(N)}{2/\eps} = \big\|f^*\big\| - \frac{\eps}{2} \capp(N).
\end{equation*}

To prove the first inequality of \eqref{eq1}, we need the following lemma.

\begin{Lemma} \label{diff}
If $f_1(N,x)(e) \neq f_2(N,x)(e)$, then there exists a short path through $e$ which is the last term of a chain of length $s$.
\end{Lemma}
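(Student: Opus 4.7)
The plan is to mimic the backward-tracing argument used in the proof of Lemma \ref{f2local}. I will run $A_1$ and $A_2$ in parallel; since both process the paths of length at most $l$ in the same order (by $|u|+x(u)$), I can refer unambiguously to ``step $j$'' across the two runs.

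First, assuming $f_1(e) \ne f_2(e)$, I will take $j_1$ to be the smallest step at which the two flows disagree at $e$, and let $u_1$ be the path processed at that step. Necessarily $e \in u_1$, since otherwise the step would not alter either flow at $e$. Because $A_1$ never skips, the two possible reasons for the disagreement introduced at step $j_1$ are: (a) $A_2$ skipped $u_1$, so by the definition of $A_2$ the path $u_1$ is the first term of a chain of length $s$, and we are done immediately; or (b) both algorithms augmented on $u_1$ but with different capacities, which forces $cap(f_1, u_1) \ne cap(f_2, u_1)$, and hence $f_1(e'') \ne f_2(e'')$ for some edge $e'' \in u_1$ just before step $j_1$.

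In case (b) I iterate on $e''$: let $j_2 < j_1$ be the smallest step at which the two flows first disagreed at $e''$, and let $u_2$ be the path processed at step $j_2$. Then $u_2$ contains $e''$, so $u_1$ and $u_2$ share an undirected edge. Repeating until case (a) is triggered produces a finite sequence $u_1, u_2, \ldots, u_t$ with strictly decreasing labels $|u_i|+x(u_i)$ (labels are distinct and later $i$ corresponds to a strictly earlier step) and consecutive entries sharing an edge, where $u_t = v_1$ is the first term of some length-$s$ chain $v_1, v_2, \ldots, v_s$ in $G$.

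Finally I concatenate: $u_1, u_2, \ldots, u_{t-1}, v_1, v_2, \ldots, v_s$ is a sequence of length $t-1+s \ge s$ in which strict label-decrease and consecutive edge-sharing both hold across the seam $u_{t-1} \to v_1 = u_t$ by construction, so it is a chain. Its first $s$ entries form a chain of length $s$ whose initial term $u_1$ passes through $e$, which proves the lemma. The one point of care is the gluing at the seam --- one must verify that the two chain conditions match up across $u_{t-1}$ and $v_1 = u_t$ --- but this is immediate from the recursion; everything else in the argument is forced by the definitions of $A_1$ and $A_2$.
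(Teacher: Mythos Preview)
Your proof is correct and follows essentially the same backward-tracing argument as the paper: run $A_1$ and $A_2$ in parallel, trace differences back through a sequence $u_1,\ldots,u_t$ of intersecting paths with decreasing labels until you hit a path skipped by $A_2$, then concatenate with the length-$s$ chain witnessing the skip and truncate. Your write-up is somewhat more explicit about the step indices and the gluing at the seam, but the idea is identical; one minor phrasing point is that case~(b) should also encompass the situation where one algorithm finds $u_1$ not to be an augmenting path at all, though the same conclusion (a prior disagreement on some edge $e''\in u_1$) follows immediately.
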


\begin{proof}
Let us consider the executions of $A_1$ and $A_2$ in parallel so that at the same time these take the same augmenting path.
If at a point of the executions, the two flows differ in an edge $\tilde{e}$, then there must have been a path $u$ through $\tilde{e}$ on which the two algorithms augmented by different values. 
There are two possible reasons to it:
\begin{enumerate}
\item $u$ is the last term of a chain of length $s$;
\item $u$ has an edge $e'$ on which the values of the two flows were different before taking $u$.
\end{enumerate}
Assume that at the end, the two flows are different at $e$. 
Using the previous observation, let us take a path $u$ through $\tilde{e}$ on which the two augmentation were different, and consider which of the two reasons occurred. 
As long as it is reason 2, repeat the step with choosing $\tilde{e}$ as the $e'$ of the previous step. 
Since by each step we jump to an earlier point of the execution of the algorithms, we must get reason 1 eventually. 
Denote the paths considered during the process by $u_1, u_2, \ldots, u_t$. 
The path $u_1$ is the last term of a chain of length $s$. 
Hence, by appending the sequence $u_2, \ldots, u_t$ and keeping the last $s$ elements of the chain obtained, we arrive at a chain of length $s$ whose last term is $u_t$.
\end{proof}

\begin{proof}[Proof of Theorem~\ref{localg}]
Clearly $f_2\leq f_1$. 
If $f_1(N,x)(e) \neq f_2(N,x)(e)$, then by Lemma~\ref{diff}, there exists a chain of length $s=q(\ell, \frac{\eps}{4})$. 
By Lemma~\ref{spread} this has probability at most $\frac{\eps}{4}$, and when it occurs, the difference $f_1(N,x)(e) - f_2(N,x)(e)$ is at most $2c(e)$. 
Therefore,
\begin{equation*}
\E_x\big(\big\|f_1(N,x)\big\|\big) - \E_x\big(\big\|f_2(N,x)\big\|\big) = \E_x\big(\big\|f_1(N,x) - f_2(N,x)\big\|\big) =
\end{equation*}
\begin{equation*}
= \E_x\Big(\sum_{e \in \out(S)} \big(f_1(N,x)(e) - f_2(N,x)(e)\big)\Big) = \sum_{e \in \out(S)} \E_x\big(f_1(N,x)(e) - f_2(N,x)(e)\big) \le 
\end{equation*}
\begin{equation*}
\le \sum_{e \in \out(S)} \frac{\eps}{4} \cdot 2c(e) = \frac{\eps}{2} \sum_{e \in \out(S)}c(e) \le \frac{\eps}{2} \capp(N),
\end{equation*}
which finishes the proof of \eqref{eq1}.

The function $\bar{f}_2(e) = \E(f_2(N,x)(e))$ is a convex combination of flows, hence it is a flow, and $\big\|\bar{f}_2\big\| = \E(\big\|f_2(N,x)\big\|) \ge \big\|f^*\big\| - \eps \capp(N)$. 
Furthermore, $\bar{f}_2(e)$ depends only on $B_{s\ell}(e)$, so it can be calculated by a local algorithm. Consequently, this algorithm satisfies the first inequality \eqref{Thm1eq} in Theorem~\ref{localg}. 

The second inequality \eqref{Thm1eqbounded} obviously holds for the local flow algorithm constructed above by substituting $\eps/(dM)$ into $\eps$ in \eqref{Thm1eq}, since $\capp(N)\leq Mdn$.
\end{proof}

\begin{proof}[Proof of Corollary~\ref{cor:integerflow}]
We claim that the flow $f_2(N,x)$ constructed in the proof of Theorem~\ref{localg} is appropriate. 
Indeed, it uses the technique of augmenting paths, which produces an integer flow if the edge capacities are integers.
Furthermore, in the proof of Theorem~\ref{localg}, we have shown that the flow obtained is nearly optimal in expectation. 
\end{proof}

Hence, a nearly maximum integer flow (provided an integer cost function, of course) can be approximated in $\RLA$. 
We show that it cannot be approximated by an algorithm in $\MLA$. 

\begin{center}
\includegraphics[scale=0.5]{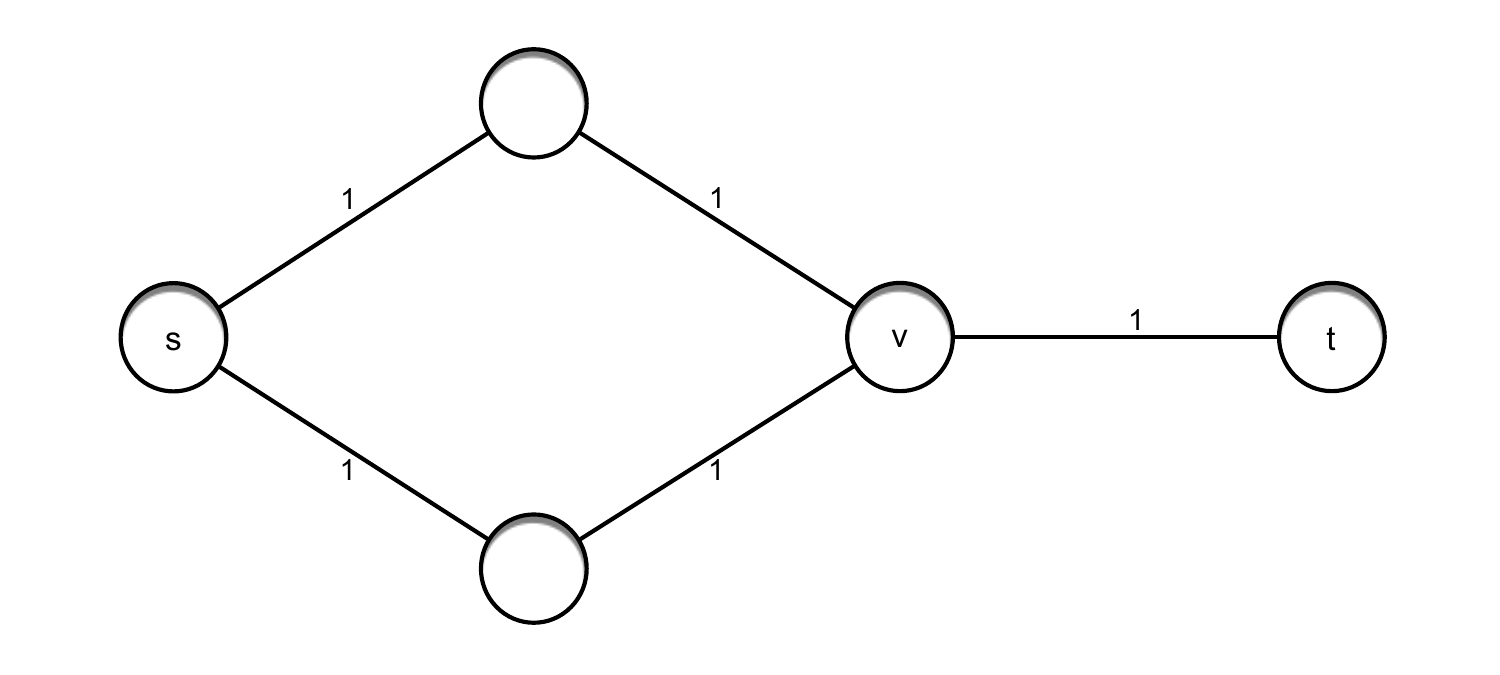}
\end{center}

Let $N$ be a network with one source $s$ and one target $t$. 
The target is only linked to a regular vertex $v$, and the source is connected to $v$ by two disjoint paths of length 2. 
If all edge capacities are 1, then the minimum cut is $\{t\}$ with value 1. 
There is of course a maximum flow with value 1 in this network: all four edges directed away from the source are at $1/2$ capacity and the edge $(v,t)$ is at full capacity 1. 
However, the maximum integer flow that can be obtained by an $\MLA$ in this network is the all-zero flow. 
Indeed, the two edges incident with the source are symmetrical, hence any flow algorithm in $\MLA$ must assign the same value to them (vertices do not have private seeds to break the symmetry). 
This common value cannot be 1, as otherwise we would obtain a flow with value 2. 
Thus an $\MLA$ must assign 0 to both edges incident with the source, yielding the all-zero flow. 

\begin{proof}[Proof of Corollary~\ref{testable}]
Let $\bar{f}_2$ be the flow constructed by the local algorithm of the proof of Theorem~\ref{localg} with error bound $\eps/2$ in \eqref{Thm1eqbounded}, which therefore satisfies $\big\|\bar{f}_2\big\| \in \big[\big\|f^*\big\| - \frac{\eps}{2} n, \big\|f^*\big\|\big]$, and let $r$ be the radius used there plus $1$. 
Let 
\begin{equation} \label{gdef}
g\Big(B_r(v_1), B_r(v_2), \ldots, B_r(v_k)\Big) = \frac{1}{k} \sum_{i=1}^k \Big( I(v_i \in S) \sum_{e \in \out(v_i)} \bar{f}_2(e) \Big)
\end{equation}

where $I(v\in S)$ denotes the indicator function of the event that the vertex $v$ is in the set of sources $S$ of the network, that is, 
$I(v\in S) = 
\begin{cases}
1\text{ if } v\in S\\
0\text{ otherwise.}
\end{cases}$

As $I(v \in S) \sum_{e \in \out(v)} \bar{f}_2(e) \in [-dM, dM]$, the variance of \eqref{gdef} is at most $d^2M^2 / k$, therefore \eqref{gdef} stochastically converges to its expected value as $k$ tends to infinity. 
This expected value is
\begin{equation*}
\frac{1}{n} \sum_{v \in V(G)} \Big( I(v \in S) \sum_{e \in \out(v)} \bar{f}_2(e) \Big) = \frac{1}{n} \sum_{e \in \out(S)} \bar{f}_2(e) = \frac{1}{n} \big\|\bar{f}_2\big\| \in \left[\frac{\big\|f^*\big\|}{n} - \frac{\eps}{2},\ \frac{\big\|f^*\big\|}{n}\right].
\end{equation*}
This implies that, for large enough $k$, these $k$, $r$ and $g$ satisfy the requirements.
\end{proof}

\subsection{Proof of Proposition~\ref{localgfraccut} and Theorem~\ref{localgcut}}\label{subsec:fraccut}

Fractional cuts are easier to work with than cuts, because just like flows, they are closed under convex combinations. 
Hence, we first show how to construct a local fractional cut algorithm by using our local flow algorithm. 
Then we reduce the problem of minimum cut to minimum fractional cut. 

\begin{proof}[Proof of Proposition~\ref{localgfraccut}]
Let us execute the local flow algorithm as in Theorem~\ref{localg} with error bound $\eps_1$. Denote the resulting flow by $f$.

We define the $\eps_2$-residual graph as
\begin{equation} \label{Gfdef}
G_f(\eps_2) = \Big(S(G), R(G), T(G), \big\{e \in \EG \big| f(e) < c(e) - \eps_2 \big\}\Big).
\end{equation}

Let $\ell = \lceil 1 / \eps_2 \rceil$. 
Denote the length of the shortest path from $S$ to a vertex $v \in V$ in the graph $\big(G_f(\eps_2)\big)$ by $\dist(v)$. 
If there is no such path then $\dist(v) = \infty$. 
We define the fractional cut $\tilde{X}$ as
\begin{alignat}{3} \label{fracconstr}
   \tilde{X}(v) &=
\begin{cases}
\min\big( \dist(v) \cdot \eps_2,\ 1 \big) & \text{if } v \in S \cup R
\\ 1 & \text{if } v \in T.
\end{cases}
\end{alignat}
This is a local fractional cut algorithm, because we need to execute the local flow algorithm for each edge only in the $\ell$-neighborhood of $v$ in order to compute $\tilde{X}(v)$.

Observe that
\begin{equation*}
\big\|f\big\| \mathop{=}^{\eqref{flownorm}} \sum_{e \in \out(S)} f(e) \mathop{=}^{\eqref{regeq}} \sum_{e \in \out(S)} f(e) + \sum_{r \in R} \Big( \big(1 - \tilde{X}(r)\big) \sum_{e \in \out(r)} f(e) \Big)
\end{equation*}
\begin{equation*}
= \sum_{s \in S} \Big(1 \cdot \sum_{e \in \out(s)} f(e)\Big) + \sum_{r \in R} \Big( \big(1 - \tilde{X}(r)\big) \sum_{e \in \out(r)} f(e) \Big) + \sum_{t \in T} \Big(0 \cdot \sum_{e \in \out(t)} f(e)\Big)
\end{equation*}
\begin{equation*}
\mathop{=}^{\eqref{fraccut}} \sum_{v \in V} \Big( \big(1 - \tilde{X}(v)\big) \sum_{e \in \out(v)} f(e) \Big) = \sum_{(a, b) \in \vE} \big(1 - \tilde{X}(a)\big) f\big((a, b)\big) 
\end{equation*}
\begin{equation*}
\mathop{=}^{\eqref{antisymmetry}} \sum_{f((a, b)) > 0} \Big( \big(1 - \tilde{X}(a)\big) f\big((a, b)\big) - \big(1 - \tilde{X}(b)\big) f\big((a, b)\big) \Big) 
= \sum_{f((a, b)) > 0} f\big((a, b)\big) \big(\tilde{X}(b) - \tilde{X}(a)\big)
\end{equation*}
\begin{equation*}
= \sum_{f((a, b)) > 0} f\big((a, b)\big) \max\big(0,\ \tilde{X}(b) - \tilde{X}(a)\big) - \sum_{f((a, b)) > 0} f\big((a, b)\big) \max\big(0,\ \tilde{X}(a) - \tilde{X}(b)\big)
\end{equation*}
\begin{equation} \label{fraceq}
\mathop{=}^{\eqref{fracnorm}} \big\|\tilde{X}\big\|_{(G, f)} - \sum_{f((a, b)) > 0} f\big((a, b)\big) \max\big(0,\ \tilde{X}(a) - \tilde{X}(b)\big).
\end{equation}

Notice that the value of a fractional cut $\|\tilde{X}\|_{(G, c)}$ is additive in $c$, because
\begin{equation*}
\big\|\tilde{X}\big\|_{(G, c_1 + c_2)} \mathop{=}^{\eqref{fracnorm}}
\sum_{(a, b) \in \vE(G)} (c_1 + c_2)\big((a, b)\big) \max\big(0,\ \tilde{X}(b) - \tilde{X}(a)\big)
\end{equation*}
\begin{equation*}
= \sum_{(a, b) \in \vE(G)} c_1\big((a, b)\big) \max\big(0,\ \tilde{X}(b) - \tilde{X}(a)\big) + \sum_{(a, b) \in \vE(G)} c_2\big((a, b)\big) \max\big(0,\ \tilde{X}(b) - \tilde{X}(a)\big)
\end{equation*}
\begin{equation} \label{additivec}
\mathop{=}^{\eqref{fracnorm}} \big\|\tilde{X}\big\|_{(G, c_1)} + \big\|\tilde{X}\big\|_{(G, c_2)}.
\end{equation}

Using this observation with $c = f + (c - f)$, and using \eqref{fraceq}, we get
\begin{equation*}
\big\|\tilde{X}\big\| = \big\|\tilde{X}\big\|_{(G, c)} \mathop{=}^{\eqref{additivec}} \big\|\tilde{X}\big\|_{(G, f)} + \big\|\tilde{X}\big\|_{(G, c - f)}
\end{equation*}
\begin{equation} \label{threeterms}
\mathop{=}^{\eqref{fraceq}} \big\|f\big\| + \sum_{f((a, b)) > 0} f\big((a, b)\big) \max\big(0,\ \tilde{X}(a) - \tilde{X}(b)\big) + \big\|\tilde{X}\big\|_{(G, c - f)}.
\end{equation}

We call an edge $(a, b) \in \vE(N)$ \textbf{bad} if there exists a short augmenting path in $G_f(\eps_2)$ finishing with the edge $(a, b)$. 
Or equivalently, $(a, b) \in \vE(N)$ is bad if $b \in T$ and there exists a path in $G_f(\eps_2)$ from $S$ to $a$ of length at most $\ell - 1$. 
Consider the terms $f\big((a, b)\big) \max \big( 0,\ \bar{X}(a) - \bar{X}(b) \big)$. If $f\big((a, b)\big) \ge \eps_2$, then $(b, a) \in \vE\big(G_f(\eps_2)\big)$ because $f\big((b, a)\big) = - f\big((a, b)\big) < - \eps_2 \le c(b, a) - \eps_2$.
By $\eqref{fracconstr}$, unless if $(a, b)$ is bad, this implies $\tilde{X}(a) - \tilde{X}(b) \le \eps_2$. 
Therefore, \underline{if $(a, b)$ is not bad}, then either $f\big((a, b)\big)$ or $\max \big( 0, \bar{X}(a) - \bar{X}(b) \big)$ is at most $\eps_2$, and both are $\in [0, 1]$, thus
\begin{equation} \label{term2a}
f\big((a, b)\big) \max \big( 0,\ \bar{X}(a) - \bar{X}(b) \big) \le \eps_2.
\end{equation}

As $\|f\| \ge \|f^*\| - \eps_1 n$, there exist at most $\frac{\eps_1}{\eps_2} n$ edge-disjoint augmenting paths with capacity at least $\eps_2$. 
Each short augmenting path intersect at most $z = z(\ell)$ number of short augmenting paths, which yields that there are at most $\frac{\eps_1 z}{\eps_2} n$ number of short augmenting paths with capacity at least $\eps_2$. 
Therefore, there exist at most $\frac{\eps_1 z}{\eps_2} n$ number of bad edges. 
Combining this estimation with \eqref{term2a} and with the fact that the graph has at most $dn / 2$ number of edges, we obtain
\begin{equation} \label{term2}
\sum_{f((a, b)) > 0} f\big((a, b)\big) \max\big(0,\ \tilde{X}(a) - \tilde{X}(b)\big) \le \frac{\eps_1 z}{\eps_2} n + \frac{\eps_2 d}{2} n
\end{equation}

Consider now the terms of the sum
\begin{equation} \label{term3a}
\big\| \tilde{X} \big\|_{(G, c-f)} \mathop{=}^{\eqref{fracnorm}} \sum_{(a, b) \in \vE(N)} (c - f)\big((a, b)\big) \max\big(0,\ \tilde{X}(b) - \tilde{X}(a)\big)
\end{equation}
Assume that $(a, b) \in \vE(N)$ is not bad. If $(c-f)\big((a, b)\big) > \eps_2$, then \eqref{Gfdef} shows that $(a, b) \in \vE\big(G_f(\eps_2)\big)$, therefore \eqref{fracconstr} implies $\tilde{X}(b) - \tilde{X}(a) \le \eps_2$. Hence, similarly to \eqref{term2a}, we get that $(c - f)\big((a, b)\big) \max\big(0,\ \tilde{X}(b) - \tilde{X}(a)\big) \le \eps_2$, therefore similarly to \eqref{term2}, we obtain
\begin{equation} \label{term3}
\sum_{(a, b) \in \vE(N)} (c - f)\big((a, b)\big) \max\big(0,\ \tilde{X}(b) - \tilde{X}(a)\big) \le \frac{\eps_1 z}{\eps_2} n + \frac{\eps_2 d}{2} n.
\end{equation}

By putting $\eps_1 = \eps^2 / (8dz)$ and $\eps_2 = \eps / (2d)$, we have $\eps_1/\eps_2 = \eps/(4z)$, and thus 
\begin{equation*}
\big\| \tilde{X} \big\| \mathop{=}^{\eqref{fracnorm}} \big\| f \big\| + \sum_{f((a, b)) > 0} f\big((a, b)\big) \max\big(0,\ \tilde{X}(a) - \tilde{X}(b)\big) + \big\| \tilde{X} \big\|_{(G, c - f)}
\end{equation*}
\begin{equation*}
\mathop{\le}^{(\ref{term2},\ref{term3})} \big\| f^* \big\| + 2 \Big( \frac{\eps_1 z}{\eps_2} n + \frac{\eps_2 d}{2} n \Big) = \big\| f^* \big\| + 2 \big( \frac{\eps}{4z}zn + \frac{\eps}{2d}\frac{d}{2}n \big) = \big\| f^* \big\| + \eps n. \qedhere
\end{equation*}
\end{proof}

\begin{proof}[Proof of Theorem~\ref{localgcut}]
Consider the local fractional cut algorithm $\tilde{C}$ as in Proposition~\ref{localgfraccut}. 
Then for each network $N$, $\big\| \tilde{C}(N) \big\| \le \big\|f^*\big\| + \eps n$. 
For all $u \in (0, 1)$, $\tilde{X}[u]$ is a local cut algorithm, and with a uniform random $u \in (0, 1)$, equation \eqref{fractocut} shows that we get a probability distribution of local cut algorithms producing the desired expected value.
\end{proof}

\subsection{Proof of Theorem~\ref{separation} and Corollary~\ref{smallgraph}}\label{subs:pumping}

\begin{Lemma} \label{convex}
For any nice family $\F$, $S_r(\F)$ is a convex compact subset of $\R^{B_r}$.
\end{Lemma}

\begin{proof}
For any two graphs $G_1, G_2 \in \F$ and $k_1,k_2 \in \N$,
\begin{equation*}
s_r\big((k_1 \times G_1) \cup (k_2 \times G_2)\big) = \frac{k_1 |V(G_1)| \cdot s_r(G_1) + k_2 |V(G_2)| \cdot s_r(G_2)}{k_1 |V(G_1)| + k_2 |V(G_2)|},
\end{equation*}
where $k \times G$ denotes the disjoint union of $k$ isomorphic copies of $G$. 
This implies that each convex combination of the $r$-neighborhood statistics of two graphs in $\F$ can be approximated by the $r$-neighborhood statistics of another graph in $\F$.
\end{proof}

Let us identify the natural base of $\R^{B_r}$ by the elements of $B_r$, and let the linear extension of $w\: B_r \rightarrow \R$ be defined as the function $\tilde{w} \: \R^{B_r} \rightarrow \R$,
\begin{equation} \label{linex}
\tilde{w}\Big( \sum_{b \in B_r} \lambda_b b \Big) = \sum_{b \in B_r} \lambda_b w(b).
\end{equation}
Let us define
\begin{equation} \label{mdef}
m(\F, w) = \max_{P \in S_r(\F)} \sum_{b \in B_r} w(b) P(b) \mathop{=}^{\eqref{linex}} \max_{P \in S_r(\F)} \tilde{w}(P) \mathop{=}^{\eqref{Ddef}} \sup_{G \in \F} \tilde{w}\big( s_r(G) \big).
\end{equation}

\begin{Lemma}\label{lem:convexdist}
For each distribution $P$ on $B_r$,
\begin{equation*}
P \in S_r(\F)\ \ \Leftrightarrow\ \ \forall w\: B_r \rightarrow [0, 1],\ \tilde{w}(P) \le m(\F, w).
\end{equation*}
\end{Lemma}

\begin{proof}
The set $S_r(\F)$ is convex and compact, therefore, it is determined by its dual:
\begin{equation*}
P \in S_r(\F)\ \ \Leftrightarrow\ \ \forall w\: B_r \rightarrow \R,\ \tilde{w}(P) \le m(\F, w).
\end{equation*}
Furthermore, for each $\lambda > 0$ and $c \in \R$, we have $m(\F, \lambda w + c) = \lambda m(\F, w) + c$. This shows that if we know $m(\F, w)$ for all $w\: B_r \rightarrow [0, 1]$, then we know it for all $w\: B_r \rightarrow \R$.
\end{proof}

We note without proof that the $L_1$-distance of a distribution of $r$-neighborhoods $P \in \R^{B_r}$ from the convex compact set $S_r(\F)$ is 

\begin{equation}\label{eq:minmaxdist}
\min\limits_{Q\in S_r(\F)} \big\| P-Q \big\|_1= \max(0, \max\limits_{w\: B_r \rightarrow [0, 1]} \tilde{w}(P) - m(\F, w)).
\end{equation}

We state the generalization of Theorem~\ref{separation} as indicated in Subsection~\ref{subsec:AL}, and then prove it after a series of lemmas. 

\begin{Theorem} \label{separation2}
Assume that $s_r(G_0) \notin S_r(\F)$ holds for a nice family $\F$ of graphs, a graph $G_0 \in \G$, and $r\in \N$. 
Namely, there exists a $w_0\: B_r \rightarrow [0, 1]$ satisfying
\begin{equation} \label{deltadef}
\tilde{w}_0\big(s_r(G_0)\big) - m(\F, w_0) = \delta > 0.
\end{equation}
Then for all $\eps > 0$, there is an $r' = r'(r, \eps, \delta)$, a subset $B' \subset B_{r'}$ and a spanned subgraph $G_1$ of $G_0$ such that 
\begin{align*}
\P\big(s_{r'}(G_1) \in B'\big) &> 1 -\eps, \,\,and\\
\forall G \in \F\: \,\, \P\big(s_{r'}(G) \in B'\big) &< \eps. 
\end{align*}
\end{Theorem}

Before the proof, we first need to introduce some notions. 
If for a graph $G \in \G$ and a function $w \: B_r \rightarrow [0, 1]$ we have 
\begin{equation} \label{supeq}
\tilde{w}\big(s_r(G)\big) = \sup_{G' \subseteq G} \tilde{w}\big(s_r(G')\big),
\end{equation}
then we say that $G$ is \textbf{supremal} for $w$.

For a graph $G \in \G$, radius $r \in \N$, weighting $w\: B_r \rightarrow [0, 1]$, and $\alpha > 0$, we define the averaging network $A = A(G, r, w, \alpha) = \big((S, R, T, \vE), c\big)$. The sets $S$, $R$, $T$ are three copies of $V(G)$, and for each $v \in V(G)$, we denote its copies by $v_S$, $v$ and $v_T$, respectively. We identify $R$ with $V(G)$. Let $\vE = \vE(G) \cup \big\{ (v_S, v), (v, v_S), (v, v_T), (v_T, v)$ $\big| v \in V(G) \big\}$. For each $e \in \vE(G)$, let $c(e) = d ^ r$, and for each vertex $v \in V(G)$, let $c\big((v_S, v)\big) = w\big(B_r(G, v)\big)$ and $c\big((v, v_T)\big) = \alpha$, and $c\big((v, v_S)\big) = c\big((v_T, v)\big) = 0$.

\begin{Lemma} \label{aflowsize}
Given a graph $G \in \G$, a radius $r \in \N$, a function $w \: B_r \rightarrow [0, 1]$, and $\alpha > 0$, the size of the maximum flow $\big\|f^*\big\|$ in $A(G, r, w, \alpha)$ satisfies
\begin{equation}\label{flowsize}
\min\Big(\alpha,\ \inf_{G' \subseteq G} \tilde{w}\big(s_r(G')\big)\Big) n
\le \big\|f^*\big\|
\le^{(*)} \min\Big(\alpha,\ \tilde{w}\big(s_r(G)\big)\Big) n
\end{equation}
with equality at (*) if $G$ is supremal for $w$.
\end{Lemma}

\begin{proof}
Notice that for any graph $G \in \G$,
\begin{equation}\label{tildew}
\tilde{w}\big(s_r(G)\big) = \frac{1}{|V(G)|} \sum_{v \in V(G)} w\big(B_r(G, v)\big).
\end{equation}

The upper bound in \eqref{flowsize} follows by computing the value of two cuts: $\big\|S \cup R\big\| = \alpha n$ and $\big\|S\big\| = \tilde{w}\big(s_r(G)\big) n$. 

On the other hand, consider an arbitrary cut $X$ of $A$. 
Let $R^- = R \cap X$ and $R^+ = R \backslash X$. 
Let $G^-$ and $G^+$ denote the subgraph of $G$ spanned by $R^-$ and $R^+$, respectively. 
Let $\delta_X$ denote the number of edges between $R^-$ and $R^+$. 
For each edge between $R^-$ and $R^+$, its $r$-neighborhood contains at most $d^r$ number of vertices. 
Therefore there exist at most $d^r \delta_X$ number of vertices $v \in R^-$ for which $B_r(G^-, v) \ne B_r(G, v)$, and at most $d^r \delta_X$ number of vertices $v \in R^+$ for which $B_r(G^+, v) \ne B_r(G, v)$. 
These imply the following inequalities.
\begin{equation} \label{drup}
\sum_{v \in R^-} \Big( w\big(B_r(G, w)\big) - w\big(B_r(G^-, w)\big) \Big) \le d^r \delta_X.
\end{equation}
\begin{equation} \label{drlow}
\sum_{v \in R^+} \Big( w\big(B_r(G^+, w)\big) - w\big(B_r(G, w)\big) \Big) \le d^r \delta_X.
\end{equation}
Now we prove the lower bound of \eqref{flowsize}.
\begin{equation*}
\big\|X\big\| \mathop{=}^{\eqref{cutnorm}} \big|R^-\big| \alpha + \sum_{v \in R^+} w\big(B_r(G, v)\big) +  d ^ r \delta_X \mathop{\ge}^{\eqref{drlow}} \big|R^-\big| \alpha + \sum_{v \in R^+} w\big(B_r(G^+, v)\big)
\end{equation*}
\begin{equation*}
\mathop{=}^{\eqref{tildew}} \big|R^-\big| \alpha + \big|R^+\big| \tilde{w}\big(s_r(G^+)\big) \ge \Big(\big|R^-\big| + \big|R^+\big|\Big) \min\Big(\alpha,\ \tilde{w}\big(s_r(G^+)\big)\Big)
\end{equation*}
\begin{equation*}
= \min\Big(\alpha,\  \tilde{w}\big(s_r(G^+)\big)\Big) n \ge
\min\Big(\alpha,\ \inf_{G' \subseteq G} \tilde{w}\big(s_r(G')\big)\Big) n.
\end{equation*}
Finally, we show that if $G$ is supremal for $w$, then the upper bound of \eqref{flowsize} is also a lower bound.
\begin{equation*}
\big\|X\big\| \mathop{=}^{\eqref{cutnorm}} \big|R^-\big| \alpha + \sum_{v \in R^+} w\big(B_r(G, v)\big) +  d ^ r \delta_X
= \big|R^-\big| \alpha + \sum_{v \in R} w\big(B_r(G, v)\big) - \sum_{v \in R^-} w\big(B_r(G, v)\big)
\end{equation*}
\begin{equation*}
+ d ^ r \delta_X \mathop{\ge}^{\eqref{drup}} \big|R^-\big| \alpha + \sum_{v \in R} w\big(B_r(G, v)\big) - \sum_{v \in R^-} w\big(B_r(G^-, v)\big) \mathop{=}^{\eqref{tildew}} \big|R^-\big| \alpha + \big|R\big| \tilde{w}\big(s_r(G)\big)
\end{equation*}
\begin{equation*}
- \big|R^-\big| \tilde{w}\big(s_r(G^-)\big) = \big|R^-\big| \alpha + \big|R^+\big| \tilde{w}\big(s_r(G)\big) + \big|R^-\big| \Big( \tilde{w}\big(s_r(G)\big) - \tilde{w}\big(s_r(G^-)\big) \Big) \mathop{\ge}^{\eqref{supeq}}
\end{equation*}
\begin{equation*}
\big|R^-\big| \alpha + \big|R^+\big| \tilde{w}\big(s_r(G)\big)
\ge \Big(\big|R^-\big| + \big|R^+\big|\Big) \min\Big(\alpha,\ \tilde{w}\big(s_r(G)\big)\Big)
\ge \min\Big(\alpha,\ \tilde{w}\big(s_r(G)\big)\Big) n. \qedhere
\end{equation*}
\end{proof}

For a radius $r$, weighting $w \: B_r \rightarrow [0, 1]$, $\eps > 0$ and $\alpha > 0$, we define the following operator $W_{\eps}(r, w, \alpha)$. Its value will be a new weighting $w' \: B_{r'} \rightarrow [0, \alpha]$, where $r' = g_1(r, \eps)$ is $r$ plus the radius used in Theorem~\ref{localg} with error bound $\eps / d^r$.

Consider an arbitrary $B \in B_{r'}$ with root $v$. Let $A_0 = A(B, r, w, \alpha)$. Consider the flow $f$ generated by the local flow algorithm in Theorem~\ref{localg} on $A_0$ with error bound $\eps$. 
\begin{equation} \label{wprimedef}
w'(B) = \alpha - f\big((v, v_T)\big).
\end{equation}
Notice that
\begin{equation} \label{wsupport}
w'(B) = \alpha - f\big((v, v_T)\big) \in \alpha - \Big[-c\big((v_T, v)\big), c\big((v, v_T)\big)\Big] = \alpha - [0, \alpha] = [0, \alpha].
\end{equation}

\begin{Lemma} \label{WHG}
For each graph $G$, radius $r$, weighting $w \: B_r \rightarrow [0, 1]$, $\eps > 0$, $\alpha > 0$, $r' = g_1(r, \eps)$ and $w' = W_{\eps}(r, w, \alpha)$,
\begin{equation} \label{lower}
\max\Big(0,\ \alpha - \tilde{w}\big(s_r(G)\big)\Big) \le \tilde{w}'\big(s_{r'}(G)\big) \le \max\Big(0,\ \alpha - \inf_{G' \subseteq G} \tilde{w}\big(s_r(G')\big)\Big) + \eps.
\end{equation}
Furthermore, if $G$ is supremal for $w$, then the lower bound is $\eps$-tight, namely
\begin{equation} \label{epsbound}
\tilde{w}'\big(s_{r'}(G)\big) \le \max\Big(0,\ \alpha - \tilde{w}\big(s_r(G)\big)\Big) + \eps.
\end{equation}
\end{Lemma}

\begin{proof}
Let $A_1 = A(G, r, w, \alpha)$, and consider the flow $f$ generated by the local flow algorithm in Theorem~\ref{localg} on $A_1$ with error bound $\eps$. For each $v \in V(G)$, $f\big((v, v_T)\big)$ depends only on $B_{r'}(G, v)$, therefore \eqref{wprimedef} shows that
\begin{equation} \label{wprime}
w'\big(B_{r'}(G, v)\big) = \alpha - f\big((v, v_T)\big).
\end{equation}
Hence, we obtain
\begin{equation*}
\tilde{w}'\big(s_{r'}(G)\big)
\mathop{=}^{\eqref{tildew}} \frac{1}{n} \sum_{v \in V(G)} w'\big(B_r(G, v)\big)
\mathop{=}^{\eqref{wprime}} \frac{1}{n} \sum_{v \in V(G)} \Big(\alpha - f\big((v, v_T)\big)\Big)
\end{equation*}
\begin{equation*}
= \alpha - \frac{1}{n} \sum_{v \in V(G)} f\big((v, v_T)\big)
= \alpha - \frac{1}{n}\big\|f\big\|
\mathop{\in}^{\eqref{Thm1eq}} \alpha - \frac{1}{n}\Big[\big\|f^*\big\| - \eps n, \big\|f^*\big\|\Big] 
\end{equation*}
\begin{equation*}
\mathop{\subseteq}^{\eqref{flowsize}} \alpha - \bigg[ \min\Big(\alpha,\ \inf_{G' \subseteq G} \tilde{w}\big(s_r(G')\big)\Big) - \eps,\ \min\Big(\alpha,\ \tilde{w}\big(s_r(G)\big)\Big) \bigg]
\end{equation*}
\begin{equation*}
= \bigg[\max\Big(0,\ \alpha - \tilde{w}\big(s_r(G)\big)\Big),\ \max\Big(0,\ \alpha - \inf_{G' \subseteq G} \tilde{w}\big(s_r(G')\big)\Big) + \eps \bigg] \qedhere
\end{equation*}
\end{proof}

We are ready to prove Theorem~\ref{separation2} and then Corollary~\ref{smallgraph}. 

\begin{proof}[Proof of Theorem~\ref{separation2}]
Let us set
\begin{equation} \label{eps1}
\eps_1 = \delta \eps / 2,
\end{equation}
and let $w_1 = W_{\eps_1}\big(r, w_0, \tilde{w}_0(s_r(G_0))\big)$ with radius $r_1$. Then, let $w_2 = W_{\eps_1}\big(r_1, w_1, \delta\big)$, and let $r'$ be the radius it uses. Let us define
\begin{equation} \label{Mdef}
B' = \big\{ B \in B_{r'}: w_2(B) > \delta / 2 \big\}.
\end{equation}
Let $G_1$ be the spanned subgraph of $G_0$ with the lowest $\tilde{w}_1\big(s_{r_1}(G_1)\big)$. We show that this satisfies the requirements.

Let $G_2$ be the spanned subgraph of $G_0$ with the highest $\tilde{w}_0\big(s_r(G_2)\big)$. These also imply $G_1$ is infimal for $w_1$, and $G_2$ is supremal for $w_0$, namely
\begin{align}
\forall G' \subseteq G_0:&& \tilde{w}_1\big(s_{r_1}(G')\big) \ge \tilde{w}_1\big(s_{r_1}(G_1)\big) &= \inf_{G \subseteq G_1} \tilde{w}_1\big(s_{r_1}(G)\big), \label{G1inf}
\\ \forall G' \subseteq G_0:&& \tilde{w}_0\big(s_r(G')\big) \le \tilde{w}_0\big(s_r(G_2)\big) &= \sup_{G \subseteq G_2} \tilde{w}_0\big(s_r(G)\big). \label{G2sup}
\end{align}

\begin{equation*}
\tilde{w}_2\big(s_{r'}(G_1)\big)
\mathop{\ge}^{\eqref{lower}}
\delta - \tilde{w}_1\big(s_{r_1}(G_1)\big)
\mathop{\ge}^{\eqref{G1inf}} \delta - \tilde{w}_1\big(s_{r_1}(G_2)\big)
\end{equation*}
\begin{equation} \label{w2G1}
\mathop{\ge}^{\eqref{epsbound}}
\delta - \bigg( \max\Big(0,\ \tilde{w}_0\big(s_r(G_0)\big) - \tilde{w}_0\big(s_r(G_2)\big)\Big) + \eps_1 \bigg)
\mathop{\ge}^{\eqref{G2sup}}
\delta - \eps_1.
\end{equation}
\begin{equation} \label{PG1}
\P\big(s_{r'}(G_1) \in B'\big)
\mathop{=}^{\eqref{Mdef}} \P\Big( w_2\big(s_{r'}(G_1)\big) > \delta / 2 \Big)
= \P\Big( \delta - w_2\big(s_{r'}(G_1)\big) \le \delta / 2 \Big)
\end{equation}

Identity \eqref{wsupport} shows that $\forall B \in B_{r'}\: w_2(B) \le \delta$, therefore using Markov's inequality for $\delta - w_2\big(s_{r'}(G_1)\big) > 0$,

\begin{equation*}
\eqref{PG1} \ge 1 - \frac{\E\big(\delta - w_2(s_{r'}(G_1))\big)}{\delta / 2}
=  1 - \frac{\delta - \E\big(w_2(s_{r'}(G_1))\big)}{\delta / 2}
\end{equation*}
\begin{equation*}
\mathop{=}^{\eqref{linex}} 1 - \frac{\delta - \tilde{w}_2\big(s_{r'}(G_1)\big)}{\delta / 2}
\mathop{\ge}^{\eqref{w2G1}} 1 - \frac{\delta - (\delta -\eps_1)}{\delta / 2}
= 1 - \frac{2 \eps_1}{\delta} \mathop{=}^{\eqref{eps1}} 1 - \eps.
\end{equation*}

On the other hand, for all $G \in \F$,
\begin{equation*}
\tilde{w}_1\big(s_{r_1}(G)\big)
\mathop{\ge}^{\eqref{lower}}
\tilde{w}_0\big(s_r(G_0)\big) - \tilde{w}_0\big(s_r(G)\big)
\mathop{\ge}^{\eqref{mdef}} \tilde{w}_0\big(s_r(G_0)\big) - m(\F, w_0)
\mathop{=}^{\eqref{deltadef}} \delta.
\end{equation*}
Therefore, as $\F$ is closed under taking spanned subgraphs, for all $G \in \F$ we have 
\begin{equation} \label{spandelta}
\inf_{G \subseteq G_1} \tilde{w_1}\big(s_{r_1}(G)\big) \ge \delta.
\end{equation}

Therefore for all $G \in \F$,
\begin{equation} \label{wG}
\tilde{w}_2\big(s_{r'}(G)\big)
\mathop{\le}^{\eqref{lower}}
\max\Big(0,\ \delta - \inf_{G \subseteq G_1} \tilde{w_1}\big(s_{r_1}(G)\big)\Big) + \eps_1 \mathop{=}^{\eqref{spandelta}} \eps_1.
\end{equation}
Using Markov's inequality for $w_2\big(s_{r'}(G)\big) > 0$,
\begin{equation*}
\P\big(s_{r'}(G) \in B'\big)
\mathop{=}^{\eqref{PG1}} \P\Big(w_2\big(s_{r'}(G)\big) > \delta / 2 \Big)
\le \frac{\E\big(w_2(s_{r'}(G))\big)}{\delta / 2}
\mathop{=}^{\eqref{linex}} \frac{\tilde{w}_2\big(s_{r'}(G)\big)}{\delta / 2}
\mathop{\le}^{\eqref{wG}} \frac{2 \eps_1}{\delta} \mathop{=}^{\eqref{eps1}} \eps. \qedhere
\end{equation*}
\end{proof}

\begin{proof}[Proof of Corollary~\ref{smallgraph}]
Let $\F$ be the nice family of all graphs with degrees bounded by $d$, all of whose components have size at most $n(r, \delta) - 1$. 
By the definition of $n(r, \delta)$, there exists a graph $G_0 \in \G$ such that $\big\| s_r(G_0)-s_r(G)\big\|\geq \delta$ for all $G\in \F$. 
Therefore, by equation~\eqref{eq:minmaxdist}, there exists a $w_0\: B_r \rightarrow [0, 1]$ satisfying $\tilde{w}_0\big(s_r(G_0)\big) - m(\F, w_0) \ge \delta$. 
Then applying Theorem~\ref{separation2} with these $\F$, $r$, $\delta$, $w_0$, and $\eps = \frac{1 - \lambda}{2}$ provides us with $r'$ and $B' \subset B_{r'}$. 
Let $w_1$ be the characteristic function of $B'$, namely, $w_1: B_{r'} \rightarrow \{0, 1\}$ such that $w_1(b) = 1$ if and only if $b \in B'$. 
By Theorem~\ref{separation2} we have

$$\exists G_1 \in \G \: \forall G \in \F \: \,\,\,\, \tilde{w}_1\big(s_r(G_1)\big) - \tilde{w}_1\big(s_r(G)\big) > (1 - \eps) - \eps = \lambda.$$ 

Consequently, $n(r, \delta) - 1$ does not satisfy the modified conditions of \eqref{ndef} for radius $r'$ and error bound $\lambda$. 
Thus $n(r, \delta) \le n\big(r', \lambda\big)$.
\end{proof}

\section{Unimodular random graphs}

Unimodular random graphs (URGs) are probability distributions on rooted connected graphs with finite degrees satisfying the unimodularity condition. 
Informally, this condition expresses that a URG behaves like a finite graph rooted at a uniform random vertex. 
In particular, every finite connected graph with uniformly chosen root is a URG. 
We are only focusing on URGs with given maximum degree $d$. 

For the formal definition, let $\mathfrak{G}^\bullet$ be the set of all rooted, connected countable graphs with degrees bounded by $d$. 
In order to define the natural topology on $\mathfrak{G}^\bullet$, we endow it with a metric: two rooted graphs with centers $o$ and $o'$ are at distance $2^{-r}$, where $r$ is the smallest radius such that $B(o, r)\not\cong B(o',r)$. 
This makes $\mathfrak{G}^\bullet$ a completely disconnected compact metric space. 

Let the measure $\sigma$ be a Borel measure defined on the induced Borel sigma-algebra $\mathfrak{A}$. 
We define the probability measure $\sigma^*$ by $\sigma^*(A)=\int\limits_A \deg d\sigma / \int\limits_{\mathfrak{G}^\bullet} \deg d\sigma$. 
Select a connected rooted graph according to the measure $\sigma^*$ and then select a uniform random edge $e$ incident with the root. 
This way we get a probability measure on rooted connected graphs (all of whose vertices have degree at most $d$) with a ``root edge'' incident to the root. 
If this measure is invariant under moving the root to the other endpoint of the root edge, then we say that $\sigma$ is involution invariant, or alternatively, that the measure is a \textbf{unimodular} random graph, or $\URG$ for short. 
Further details, examples and alternative reformulations, including the Mass Transport Principle can be found in \cite{Lovaszbook}. 
Note that a finite connected graph with maximum degree $d$ can always be viewed as a $\URG$ by setting $\sigma$ as the uniform distribution on the vertices (inducing a distribution $\sigma^*$ on rooted graphs where the probability of picking a vertex as a root is proportionate to its degree, and in return, a uniform distribution on the root edges).

We can refine the Borel sigma algebra by decorating the vertices and edges of graphs in $\mathfrak{G}^\bullet$, and repeat the above construction. 
To define unimodular random flow networks ($\URN$s), we decorate each vertex by exactly one of the letters $S$, $R$, $T$, and each directed edge by a non-negative real number, the value of a cost function. 
We assume that this cost function is bounded by a given $M>0$, that is $c(e) \in [0,M]$ for each edge $e$. 
As the set of decorating labels $\{S,R,T\}$ of vertices and $[0,M]$ of edges are compact, the induced topological space is once again compact. 
A probability measure $\sigma$ defines a $\URN$ if 
\begin{enumerate}
\item $\sigma$ is a Borel probability measure on the larger Borel sigma algebra $\mathfrak{A}'$ obtained this way (where we are allowed to use expressions such as $v\in S$ or $c(e)<M/2$ when defining Borel sets), and 
\item the probability measure $\sigma^*$ induced on the rooted networks with a root edge as above is invariant under moving the root to the other endpoint of the root edge. 
\end{enumerate}
In particular, the restriction $\sigma\upharpoonright_\mathfrak{A}$ to the original Borel sigma algebra is a unimodular random graph. 
A unimodular flow of a $\URN$ is a flow that is a Borel function with respect to $\mathfrak{A}'$. 
The Bernoulli graphing is obtained by decorating the vertices (and/or edges, and in fact possibly all short paths) of graphs $v\in \mathfrak{G}^\bullet$ by a uniform random seed $x(v)\in [0,1]$ (and/or $x(e)\in [0,1]$, and/or $x(u)\in [0,1]$). 
Given a $\URG$ $\sigma$, the measure $\sigma'$ on the above topological space can be defined by first sampling a graph from $\sigma$, and then assigning a uniform seed in $[0,1]$ to all vertices independently. 
The above two refinements can be combined: we may decorate the vertices/edges/short paths of a $\URN$ by uniform random seeds in $[0,1]$. 
Once again, we obtain a compact space for the same reason as before. 
Note that a construction using these random seeds might not be Borel in the sense of the $\URN$, as the network is oblivious to the seeds. 
However, taking the conditional expectation with respect to the product measure of the seeds (if possible) might lead to a construction that is legal in the network sense. 
A simplified form of this idea has already been used in the paper: the concept of a chain involved random seeds $x(u)$ corresponding to short paths, and when applying Lemma~\ref{spread} in the proof of Theorem~\ref{localg}, we took the expectation over such seeds. 
Fortunately, the expected value of random flows is a flow (in this case, automatically a unimodular flow), and the expected value of random fractional cuts is a fractional cut. 
The same assertion fails for cuts in general: expectation of cuts yields a fractional cut, which is one of the reasons we work with fractional cuts. 

A \textbf{factor fractional spanned subgraph} of the $\URG$ $\sigma$ (or a fractional cut of the $\URN$ $\sigma$) is a measurable function from $(\mathfrak{G}^\bullet, \mathfrak{A}, \sigma)$ to $[0,1]$ (equipped with the Lebesgue-measure). 
It is a \textbf{factor spanned subgraph} of the $\URG$ $\sigma$ (or a cut of the $\URN$ $\sigma$) if it assigns 0 or 1 to $\sigma$-almost all rooted graphs. 
We can think about the subgraph as those rooted graphs where the function attains the value 1. 
A \textbf{factor of i.i.d.\ ($\FIID$) fractional spanned subgraph} of the $\URG$ $\sigma$ (or a fractional cut of the $\URN$ $\sigma$) is a measurable function from the Bernoulli graphing to $[0,1]$. 
More precisely, such a measurable function induces a unimodular fractional spanned subgraph of $\sigma$. 
It is an \textbf{$\FIID$ spanned subgraph} of the $\URG$ $\sigma$ (or a cut of the $\URN$ $\sigma$) if the measurable function has range in $\{0,1\}$ (almost everywhere). 
If we want to emphasize that we are talking about a subgraph rather than a fractional subgraph, or a cut rather than a fractional cut, we can call these objects integer subgraphs or integer cuts. 
These give rise to special unimodular spanned (fractional or integer) subgraphs of $\sigma$, and in particular, are themselves $\URG$s: they induce an involution invariant measure $\tau$ on $(\mathfrak{G}^\bullet, \mathfrak{A})$. 
The fact that $\tau$ is an involution invariant measure obtained this way from an $\FIID$ spanned subgraph of $\sigma$ is indicated by the notation $\tau\subseteq \sigma$.  
The notion of a factor of i.i.d.\ process is a central concept in local algorithms. 
It is relevant to us because the local flow algorithm presented in the proof of Theorem~\ref{localg} was constructed as the expectation of a factor of i.i.d.\ process with respect to the seeds. 
Most results in this paper hinges on that factor of i.i.d.\ process, and it is also used in the generalized results to come.  
We provide further motivation to consider $\FIID$ spanned (fractional) subgraphs rather than the weaker notion of unimodular random (fractional) subgraphs after the discussion on the Aldous-Lyons Conjecture.


\subsection{Generalization of the main results to $\URG$s and $\URN$s}


First of all, we generalize the main theorems to unimodular random graphs and networks. 
Note that unimodular random graphs do not admit the notion of the number of vertices, and similarly, we cannot define values of flows and cuts in $\URG$s. 
Due to such obstructions, we need to find equivalent reformulations of the theorems that can be generalized to $\URG$s and $\URN$s. 

Following standard terminology, we introduce the relative size of the values of flows and cuts in proportion to the number of vertices: see for instance the independence ratio, which is the ratio of the size of the largest independent set and the size of the vertex set. 
Alternatively, one could divide by the size of the edge set rather than the vertex set. 
The conversion rate between the two alternative ratios is the average degree, which is easily accessible from $\sigma$. 

Therefore, we define the flow ratio of a finite network $N$ as $\big\|\underline{f}^*(N)\big\| = \big\|f^*(N)\big\|/n$, the relative flow value of a flow $f$ as $\big\|\underline{f}(N)\big\| =\big\| f(N) \big\|/n$, the relative fractional cut value of a fractional cut $\tilde{X}$ as $\big\|\underline{\tilde{X}}(N)\big\| =\big\| \tilde{X}(N) \big\|/n$, and the relative capacity as $\underline{\capp}(N) = \capp(N)/n$. 
Then Theorem~\ref{localg} claims the existence of a local flow algorithm for any $\eps>0$ whose output is a flow with relative flow value $\eps$ close to the flow ratio. 
Similarly, Theorem~\ref{localgcut} states that assuming a universal bound on the capacities, there is a probability measure of local cut algorithms such that the output of a random sample returns a cut with expected relative cut value $\eps$ close to the flow ratio. 

All the above notions have a natural extension to $\URN$s. 
Let $\sigma$ be an involution invariant measure on $(\mathfrak{G}^\bullet, \mathfrak{A})$. 
We define the random variable $Y$ as follows. 
If the root $o$ is in the set of sources $S$, then $Y=\sum\limits_{(o,o')\in\vE} f((o,o'))$; otherwise $Y=0$. 
Observe that $\big\|\underline{f}(N)\big\| = \mathbb{E}_\sigma (Y) = \mathbb{P}_\sigma(o\in S)\mathbb{E}_\sigma \bigg(\sum\limits_{(o,o')\in\vE} f((o,o')) \mid o\in S\bigg)$ for finite networks. 
Hence, we define $\big\|\underline{f}(N)\big\|$ as $\mathbb{E}_\sigma (Y)$ for $\URN$s. 
If $\tilde{X}$ is a fractional cut and $e=(o,o')$ is the root edge, then the random variable $Y=c\big((o, o')\big) \max\big(0,\ \tilde{X}(o') - \tilde{X}(o)\big)$ has mean $\mathbb{E}_\sigma (Y) = \big\|\underline{\tilde{X}}(N)\big\|$ for finite networks, thus this is the plausible way to define $\big\|\underline{\tilde{X}}(N)\big\|$ for $\URN$s. 
Finally, the relative capacity is $\underline{\capp}(N) = \mathbb{E}_\sigma (Y)$ where $Y=c((o,o'))$ is the capacity of the root edge. 
Note that maximum flows $f^*$, minimum cuts and minimum fractional cuts exist for $\URN$s by compactness: at this point however, we do not know whether their values coincide. 
Of course, the minimum cut value and the minimum fractional cut value must be equal due to the same reason that we presented in Subsection~\ref{subsec:fraccut}. 
But in principle, this value could be larger than the maximum flow value, for some $\URN$s. 
An advantage of the locality of the algorithms in this paper is that they nicely extend to $\URN$s, as we will see in this section. 
As a corollary, we obtain the Maximum Flow Minimum Cut Theorem for $\URN$s. 


We are ready to formulate the natural generalizations of the main results to unimodular random graphs and networks. 

\begin{Theorem}\label{localgflowuni}
Assuming a universal bound $M$ on the capacity function, for all $\eps > 0$, there is an $r\in \N$ and a local flow algorithm $F\: B_r^{(2)} \rightarrow \mathbb{R}$ such that for any unimodular random flow network $N$ we have
\begin{equation} \label{Thm1eqbounded}
\big\|\underline{F}(N)\big\| \ge \big\|\underline{f}^*(N)\big\| - \eps.
\end{equation}
\end{Theorem}

We indicate the necessary modifications to apply to the proof of Theorem~\ref{localg}. 
First, we need to slightly modify the assertion of Lemma~\ref{approx1}. 

\begin{Lemma} \label{approx1uni}
If the set of augmenting paths of length at most $\ell$ of a unimodular flow $f$ has $\sigma$ measure 0, then
\begin{equation} \label{approx1eq}
\big\|\underline{f}\big\| \ge \big\|\underline{f}^*\big\| - \frac{\underline{\capp}(N)}{\ell}.
\end{equation}
\end{Lemma}

The proof is essentially the same, except the summation should be replaced by expected value with respect to the probability measure $\sigma$. 
Lemma~\ref{lem:nonewshort} can be adjusted similarly. 

\begin{Lemma}\label{lem:nonewshortuni}
If the set of augmenting paths $u$ with $\len(u)<\ell$ of a flow $f$ has $\sigma$ measure 0, then augmenting on a path of length $\ell$ only creates a 0-measure set of new augmenting paths of length at most $\ell$.
\end{Lemma}

Chains of short paths and their label are defined for $\URG$s in the same way as before. 
There is a slight technical difficulty with the natural generalization of Lemma~\ref{spread} to $\URN$s.

\begin{Lemma} \label{spreaduni}
For each $\ell \in \mathbb{N}$ and $\eps > 0$ there exists a $q = q(\ell, \eps) \in \mathbb{N}$ such that for every unimodular random graph $G$ and $\sigma$-almost all undirected edges $e$ of $G$, 
the probability over all labelings that there exists a chain $u_1, u_2, \ldots, u_q$ for which $u_q$ contains $e$ is at most $\eps$.
\end{Lemma}
\begin{proof}
Note that the existence of such a chain for a given edge is determined by the $q\ell$-neighborhood of the edge. 
The involution invariant measure induces a probability distribution of such neighborhoods (with a distinguished root edge). 
By the Law of Total Probability, the probability described in the assertion of the lemma is a convex combination of the analogous probabilities on these finitely many finite neighborhoods. 
According to the finite variant, Lemma~\ref{spread}, for each of the neighborhoods the probability in question is at most $\eps$ (by choosing the same $q$ as in the proof of Lemma~\ref{spread}), and consequently, so is their convex combination. 
\end{proof}

The next step in the original proof was to define the algorithms $f_1(N,x)$ and $f_2(N,x)$. 
Once again, slight adjustments are needed. 
The main difference to the finite case is that it is not necessarily possible to deal with all short paths, since there are infinitely many of them in a $\URN$. 
So the definition of $f_1(N,x)$ makes use of the labelling $x$. 
Informally, the idea would be to systematically go through the short paths in the order of their label. 
This would not be a problem in finite networks; however, in an infinite network, the labels usually do not form a well-order. 
Hence, we could not even make sure that every short path is considered in a finite number of steps. 
As we want to define a local algorithm anyway, which can be applied in neighborhoods in parallel, it is more logical to consider a short path $u$, and to try to determine how the flow is altered on it (starting from the empty flow, and applying the Edmonds-Karp algorithm). 
In order to compute this, we need to know how the flow values were altered on short paths intersecting $u$ that have smaller label, as they had priority over $u$. 
Similarly, we need to know how the flow values were altered on the short paths intersecting the ones we just considered, etc. 
Note that the branches of this downward tree of intersecting short paths are chains. 
Thus Lemma~\ref{spreaduni} tells us that the probability that the process survives at least $q=q(\ell, \eps)$ rounds is at most $\eps$ for almost all initial $u$. 
In particular, if we are willing to run the process without a bound on the number of steps, then it halts with probability 1 (where the probability space is the product space of the seeds). 
This yields a factor of i.i.d.\ process, which is $f_1(N,x)$: it augments on $\sigma$-almost all short paths with probability 1. 
Note that $f_1(N,x)$ is not a local algorithm, as we did not set a bound on the depth $q$ above: setting such a bound is equivalent to allowing the process to work in a bounded neighborhood of $u$, that is, locality. 
This is exactly why we define $f_2(N,x)$ analogously to the finite case. 
Namely, if the above process did not vanish after a certain fixed number of steps $s$, then we give up trying. 
So we do not augment on those paths that can be obtained as the last element of a chain of length $s$. 
Hence, $f_2(N,x)$ is a block factor of i.i.d.\ process. 

The proof that $f_2(N,x)$ is local can be copied verbatim from the finite case, see Lemma~\ref{f2local}. 
The only alteration to be made in the statement is that it holds for $\sigma$-almost all edges. 
Similarly, Lemma~\ref{diff} holds with the same modification for $\URN$s, along with the rest of the proof of Theorem~\ref{localg}, concluding the proof of Theorem~\ref{localgflowuni}. 
We emphasize again here that taking the expectation of local flow algorithms with respect to all labelings yields a local flow algorithm, as the convex combination of flows is a flow. 

Next we generalize Proposition~\ref{localgfraccut}. 

\begin{Proposition}\label{localgfraccutuni}
Assuming a universal bound $M$ on the capacity function, for each $\eps > 0$ there exists a local fractional cut algorithm such that for each unimodular random flow network $N$, it produces a factor fractional cut with value $\big\|\underline{\tilde{C}}(N)\big\| \le \|\underline{f}^*\| + \eps$.
\end{Proposition}

Note that in the proof of Proposition~\ref{localgfraccut}, we first applied the local flow algorithm with some error $\eps_1$ and defined the residual graph. 
These local constructions lift to $\URN$s directly, taking expectations with respect to $\sigma$ over the finitely many possible neighborhoods as before. 
Then we defined a fractional cut based on the distance of a vertex from the set of sources $S$. 
As there are infinitely many sources in a $\URN$ in general, we need to be careful here. 
In the definition, the distance is only relevant as long as it is at most $1/\eps_2$ and $v\in R$. 
Hence, we can do it from the perspective of the vertex $v\in R$, scanning the $1/\eps_2$ neighborhood of it, which is once again possible in a $\URN$. 
If we find a source in this neighborhood, then we choose the one closest to $v$, and compute $\tilde{X}(v)$. 
If there is no source in this neighborhood, then we set $\tilde{X}(v)=1$. 
The rest of the proof translates directly to $\URN$s, by switching summations to expectations with respect to $\sigma$. 

We obtain the analog of Corollary~\ref{cor:integerflow} and Theorem~\ref{localgcut} just as in the finite setup. 

\begin{Corollary}\label{cor:integerflowURN}
Let $x$ be a mapping that assigns an independent uniform random seed in $[0,1]$ to the vertices of a unimodular random flow network. 
Assuming a universal bound $M$ on the capacity function and that all capacities are integers, for all $\eps > 0$, there is an $r\in \N$ and a random local flow algorithm $F\: B_r^{(2)} \rightarrow \mathbb{R}$ defined on the seeded neighborhoods such that 
\begin{equation} \label{Thm1eqbounded}
\E_x\big(\big\|\underline{F}(N)\big\|\big) \ge \|\underline{f}^*\| - \eps.
\end{equation}
and such that all values of the output flow are integers. 
\end{Corollary}

\begin{Theorem}\label{localgcutuni}
Assuming a universal bound $M$ on the capacity function, for all $\eps > 0$, there is an $r\in \N$ and a probability distribution $\D$ of local cut algorithms $C\: B_r \rightarrow \{0,1\}$ that each yield a factor cut such that for any unimodular random flow network $N$ we have
\begin{equation*}
\E_{C \in \D} \big\|\underline{C}(N)\big\| \le \big\|\underline{f}^*(N)\big\| + \eps.
\end{equation*}
\end{Theorem}

As a corollary, we obtain the highly nontrivial generalization of the Maximum Flow Minimum Cut Theorem for unimodular random flow networks. 

\begin{Theorem}[MFMC for URNs]
We assume a universal bound $M$ on the capacity function. 
Then the maximum flow ratio $\big\|\underline{f}^*(N)\big\|$, the minimum cut ratio, and the minimum fractional cut ratio coincide for any unimodular random flow network $N$. 
If all capacities are integers, then there is a maximum flow with all integer values. 
Except for the maximum integer flow problem, each of these optimization problems have an optimum that can be approximated up to an arbitrarily small error by the result of a factor process. 
The maximum integer flow is approximable by results of $\FIID$ processes. 
\end{Theorem}\label{thm:MFMCuni}
\begin{proof}
By compactness, maximum flows, minimum cuts and minimum fractional cuts exist. 
According to Proposition~\ref{localgfraccutuni} and Theorem~\ref{localgcutuni}, the three corresponding extremal values are equal. 
Similarly, by compactness,  maximum integer flows exist. 
According to Corollary~\ref{cor:integerflowURN}, provided that all capacity values are integers, the maximum integer flow ratio must coincide with the other three extremal ratios. 
\end{proof}

As stated in the introduction, Lovász has proved a much more general Maximum Flow Minimum Cut Theorem in~\cite{Lovaszmeasure}, for flows and cuts defined on Markov spaces, verifying the existence of a measurable minimum cut. 
The proof uses Zorn's lemma, and hence it is non-constructive. 
In this context, a unimodular random flow network (without the edge capacities) is a reversible Markov space whose underlying standard Borel space $J$ consists of the $\{S,R,T\}$-decorated countable rooted graphs with maximum degree $d$. 
The capacity measure on $J\times J$ disintegrates at each $x\in J$ into a linear combination of Dirac measures with at most $d$ summands, corresponding to the neighbors of the rooted graph obtained by moving the root to the other endpoint of an incident edge. 
Hence, the results in \cite{Lovaszmeasure} show that in $\URN$s, a measurable minimum cut, and in fact a measurable maximum flow exist with coinciding values. 
The present paper was also motivated by Lov\'asz' interest in the subject, who was Endre Cs\'oka's PhD supervisor at the time. 
The added value in the special case of $\URN$s is a constructive proof, and a local algorithm presented that finds a near-optimal structure. 
It is unknown whether such a local algorithm, in any meaningful sense of the expression, can be provided in the general setup, or in fact anywhere beyond $\URN$s with bounded degrees. 

\subsection{Connection with the Aldous-Lyons Conjecture}

Finally, we generalize Theorem~\ref{separation2} for $\URG$s, and show how the most fundamental conjecture about $\URG$s, due to Aldous and Lyons, can be strengthened. 
Roughly speaking, the conjecture says that the set of $\URG$s is the completion of the set of finite graphs. 
For more details cf. \cite{AlLy, Lovasz}. 

Let $\U$ denote the set of unimodular random graphs $\sigma \in \U$ with maximum degree at most $d$. 
Let $s_r(\sigma)$ denote the distribution of the $r$-neighborhoods of the root. 
Extending the definition in \eqref{Ddef}, let $S_r(\U) = cl\big\{s_r(\sigma) \mid \sigma \in \U \big\}= \big\{s_r(\sigma) \mid \sigma \in \U \big\}$. 
Since every graph $G$ with a uniform random root provides a $\sigma \in \U$ with $s_r(G) = s_r(\sigma)$, we have the trivial inclusion $S_r(\G) \subseteq S_r(\U)$. 
A sequence of $\URG$s $(\sigma_n)_{n\in \N}$ locally converges to the $\URG$ $\sigma$ if for every $r\in \N$ the neighborhood statistics converge, that is, $s_r(\sigma_n)\rightarrow s_r(\sigma)$. 
This coincides with the notion of convergence in the topology on $\U$ defined earlier. 
Using the identification of each finite graph (assuming a given degree bound) with the $\URG$ where the root is chosen uniformly at random, we can talk about sequences of finite graphs locally converging to a $\URG$. 
Note that the topology induced by this convergence notion is metrizable, for instance by the \emph{sampling distance};  see \cite{Lovaszbook} for more details. 

\begin{Conjecture}[Aldous--Lyons]
Every $\URG$ is the local limit of a sequence of finite graphs, or equivalently,
\begin{equation*}
\forall r \in \N\:\ S_r(\G) = S_r(\U).
\end{equation*}
\end{Conjecture}

Note that any $\MLA$ or $\RLA$ can be applied to $\URG$s. 
Thus a major problem in the previous subsection where we generalized the results from finite graphs to $\URG$s would have indicated that the Aldous-Lyons Conjecture might be false. 

We mention that there is a conflicting conjecture in group theory. 
A finitely generated group (with any given finite generating set) is called sofic if its Cayley graph viewed as a unimodular random graph is the local limit of a sequence of finite graphs. 
It is widely believed that there exist finitely generated non-sofic groups\footnote{To be precise, the Sofic Group Conjecture says that there is a non-sofic group. However, it is well-known that a group $G$ is sofic if and only if all finitely generated subgroups of $G$ are sofic. Thus it is an equivalent form of the conjecture that there exists a finitely generated non-sofic group.}, which would imply that the Aldous-Lyons Conjecture is false. 

\begin{Def}\label{def:AL}
We say that the Aldous--Lyons Conjecture is \textbf{completely false} if for all $\eps > 0$, there exists an $r \in \N$, a subset $B' \subset B_r$ and a $\URG$ $\sigma \in \U$ such that for all $G \in \G$, the $r$-neighborhood of a random vertex of $G$ is in $B'$ with probability at most $\eps$, but the $r$-neighborhood of the root of $\sigma$ is in $B'$ with probability at least $1 - \eps$.
\end{Def}

\begin{Theorem}\label{thm:strAL}
If the Aldous--Lyons Conjecture is false, then it is completely false.
\end{Theorem}

The proof once again follows the finite case. We show the natural analog of Theorem~\ref{separation2}, which directly implies Theorem~\ref{thm:strAL}. 

\begin{Theorem} \label{separation2uni}
Assume that $s_r(\sigma_0) \notin S_r(\F)$ holds for a nice family $\F$ of graphs, a unimodular random graph $\sigma_0 \in \U$, and $r\in \N$. 
Namely, there exists a $w_0\: B_r \rightarrow [0, 1]$ satisfying
\begin{equation} \label{deltadef}
\tilde{w}_0\big(s_r(\sigma_0)\big) - m(\F, w_0) = \delta > 0.
\end{equation}
Then for all $\eps > 0$, there is an $r' = r'(r, \eps, \delta)$, a subset $B' \subset B_{r'}$ and a factor spanned subgraph $\sigma_1$ of $\sigma_0$ such that 
\begin{align*}
\P\big(s_{r'}(\sigma_1) \in B'\big) &> 1 -\eps, \,\,and\\
\forall G \in \F\: \,\, \P\big(s_{r'}(G) \in B'\big) &< \eps. 
\end{align*}
\end{Theorem}

For the proof, we need that the class of $\URG$s that can be written as the local limit of a sequence of finite graphs is closed under taking factor subgraphs. 
We prove a somewhat more general statement. 
Regardless of this Aldous-Lyons conjecture, it asserts that the set of $\URG$s that occur as a local limit of a finite graph sequence is closed under taking $\FIID$ spanned subgraphs. 

\begin{Lemma}
Let $\sigma$ be a $\URG$ that is the local limit of a sequence of finite graphs, and let $\tau\subseteq \sigma$ be an $\FIID$ spanned subgraph of $\sigma$. 
Then $\tau$ is the local limit of a sequence of finite graphs. 
\end{Lemma}
\begin{proof}
Given an $\eps>0$, the $\FIID$ spanned subgraph $\tau\subseteq \sigma$ can be approximated up to an error $\eps$ by a constant-radius factor of i.i.d. spanned subgraph $\tau(\eps)\subseteq \sigma$. 
That is, $\tau(\eps)$ is an $\FIID$ spanned subgraph of $\sigma$ induced by a measurable function from the Bernoulli graphing to $\{0,1\}$ such that the value of the measurable function only depends on the seeded $r(\eps)$-neighborhood of the root for some $r(\eps)\in \N$, and the $L_1$-distance of this measurable function and the one that induces $\tau$ is less than $\eps$. 
This makes the total variation distance of $\tau(\eps)$ and $\tau$ less than $\eps$, and in particular, the sampling distance less than $\eps$; see formulas (19.1) and (19.4) in \cite{Lovaszbook}. 

The rule that maps 0 or 1 to a vertex depending on its $r$-neighborhood can be applied to vertices of finite graphs. 
Let $(G_n(\eps))_{n\in \N}$ be the sequence of finite graphs locally converging to $\sigma$ with $|V(G_n(\eps))|\geq n$. 
For each $n$, we randomly assign independent uniform seeds from $[0,1]$ (and also independently for all $n$) to the vertices of $G_n(\eps)$. 
Then the above block factor of i.i.d.\ rule computes a subgraph $H_n(\eps)\subseteq G_n(\eps)$. 
By local convergence, as $n\rightarrow \infty$ the $r(\eps)$-neighborhood statistics of $H_n(\eps)$ tends to that of $\tau(\eps)$ with probability 1, making $(H_n(\eps))_{n\in \N}$ locally converging to $\tau(\eps)$ with probability 1. 
(Clearly $|V(H_n(\eps))|\rightarrow \infty$ with probability 1.)

For each $k\in\N$ let $\eps_k=1/k$, and let $n(k)\in \N$ be an index so that $|V(H(\eps_k))_{n(k)}|>|V(H(\eps_{k-1}))_{n(k-1)}|$ and the sampling distance of $H(\eps_k)_{n(k)}$ and $\tau(\eps_k)$ is less than $\eps_k$. 
In particular, the sampling distance of $H(\eps_k)_{n(k)}$ and $\tau$ is less than $3\eps_k$. 
Hence, the sequence $(H(\eps_k)_{n(k)})_{k\in \N}$ locally converges to $\tau$.  
\end{proof}

The proof of Theorem~\ref{separation2uni} follows that of Theorem~\ref{separation2}, with the main ingredient being the averaging network $A(\sigma, r, w, \alpha)$ defined analogously to $\URN$s. 
Note that $\tilde{w}(s_r(\sigma)) = \mathbb{E}_\sigma(w(B_r(o)))$, where $o$ is the root. 
The slight technical difficulty arising from the graphs not being finite is that it is unclear whether supremal (or infimal) $\URG$s exist among the factor spanned subgraphs of $\sigma_0$. 
There is an easy fix which would have been unnecessary to include in the proof of Theorem~\ref{separation2}: we define $\eps$-supremal (and $\eps$-infimal) $\URG$s, meaning that it has no factor spanned subgraph such that the $\tilde{w}(s_r(.))$ value of the subgraph is more than $\eps$ bigger (smaller) than the $\tilde{w}(s_r(.))$ value of the given $\URG$. 
Whenever we talk about cuts in the proof, it should be replaced by factor cuts. 
Similarly, in the finite case we chose $G_1$ as a subgraph of $G_0$ with smallest $\tilde{w_1}(s_{r_1}(.))$ value, and $G_2$ as a subgraph with largest $\tilde{w_0}(s_r(.))$ value, but now we only consider factor subgraphs, instead. 
Once again, these extremal objects may not exist, so we choose factor spanned subgraphs that are close to the infimum/supremum, instead. 
These extremality conditions were applied by using the fact that in the averaging network, a carefully chosen cut can be close to optimal. 
As the cut algorithm is a mixed factor process, and a nearly optimal one can be constructed by a factor process (Theorem~\ref{thm:MFMCuni}), we lose nothing by the additional condition that only factor cuts are considered. 
The rest is a direct translation, changing summations to $\mathbb{E}_\sigma$ and probabilities to $\mathbb{P}_\sigma$. 

We can switch the quantifiers in Theorem~\ref{thm:strAL} in an advantageous way: if for every $\varepsilon>0$ there is an extreme counterexample, then there is an extreme counterexample that works for every $\varepsilon>0$. 

\begin{Corollary}
If the Aldous--Lyons Conjecture is false, then there is a $\URG$ $\sigma$ such that for all $\eps > 0$, there exists an $r \in \N$ and a subset $B' \subset B_r$ such that for all $G \in \G$, the $r$-neighborhood of a random vertex of $G$ is in $B'$ with probability at most $\eps$, but the $r$-neighborhood of the root in $\sigma$ is in $B'$ with probability at least $1 - \eps$.
\end{Corollary}
\begin{proof}
By Theorem~\ref{thm:strAL}, we may assume that the conjecture is completely false. 
Let $\eps(k)=1/k$. 
Then there is an $r(k)\in \N$, a $B'(k)\subset B_{r(k)}$, and a $\URG$ $\sigma_k$ as in Definition~\ref{def:AL}. 
We may assume that the sequence of numbers $r(k)$ is strictly monotone increasing. 
As the space of $\URG$s is compact with respect to the total variation distance, there is a convergent subsequence of the $\sigma_k$ tending to some $\URG$ $\sigma$. 

Given an $\eps>0$, let $n$ be large enough so that for any $k\geq n$ the TV distance of $\sigma_k$ and $\sigma$ is less than $\eps/2$ and also $1/n<\eps/2$. 
We claim that $r=r(n)$ and $B'=B'(n)$ is an appropriate choice. 
For any $G \in \G$, the $r(n)$-neighborhood of a random vertex of $G$ is in $B'(n)$ with probability at most $1/n$, which is less than $\eps/2$. 
On the other hand, the $r$-neighborhood of the root of $\sigma_n$ is in $B'(n)$ with probability at least $1 - 1/k> 1-\eps/2$. 
As the TV-distance of $\sigma$ and $\sigma_n$ is less than $\eps/2$, we have that the $r$-neighborhood of the root in $\sigma$ is in $B'(n)$ with probability at least $1 - \eps$. 
\end{proof}

The same method can be applied to strengthen the Sofic Group Conjecture. 
Namely, if a finitely generated group is non-sofic, then it must be very far from being the local limit of finite graphs. 

\begin{Corollary}
Given any finitely generated non-sofic group, then its Cayley graph is a $\URG$ $\sigma$ such that for all $\eps > 0$, there exists an $r \in \N$ such that for all $G \in \G$, the $r$-neighborhood of a random vertex of $G$ is isomorphic to the (unique) $r$-neighborhood of the root in $\sigma$ with probability at most $\eps$.
\end{Corollary}
\begin{proof}
We can copy all the arguments above. 
The only thing to note is that at a point where we would switch the $\URG$ to a (nearly) supremal subgraph defined by a factor process, we cannot actually change $\sigma$. 
Indeed, Cayley graphs are transitive, so any factor subgraph of such a graph contains either all vertices or none of them. 
This yields the following direct corollary: 
there exists an $r \in \N$ and subset $B' \subset B_r$ such that for all $G \in \G$, the $r$-neighborhood of a random vertex of $G$ is in $B'$ with probability at most $\eps$, but the $r$-neighborhood of the root of $\sigma$ is in $B'$ with probability at least $1 - \eps$. 
However, as Cayley graphs are transitive, the $r$-neighborhood of the root is concentrated on a single $r$-neighborhood. 
\end{proof}

\section{Acknowledgement}

The authors are indebted to L\'aszl\'o Lov\'asz for his help in making this paper, including the proposition of the main problems themselves.

\bibliographystyle{plain}
\bibliography{refs}

\end{document}